\documentclass[prd,aps,amsfonts,superscriptaddress,nofootinbib,longbibliography,notitlepage,eqsecnum]{revtex4-2}

\usepackage{graphicx}
\usepackage[dvipsnames]{xcolor}
\usepackage{rotating}
\usepackage{amsmath,amssymb,amsthm,isomath}
\usepackage{mathrsfs} 

\usepackage{caption}
\usepackage{subcaption}

\usepackage{physics}
\usepackage{circuitikz}
\usetikzlibrary{arrows.meta} 
\usepackage{bm}
\usepackage{verbatim}
\usepackage{mathtools}
\usepackage{dsfont}

\makeatletter
\pgfcircdeclarebipole{}
    {\ctikzvalof{bipoles/generic/width}} 
    {halfgyrator}                    
    {\ctikzvalof{bipoles/generic/width}} 
    {\ctikzvalof{bipoles/generic/width}} 
    {
        \pgfpathmoveto{\pgfpoint{\pgf@circ@res@left}{0pt}}
        \pgfpathlineto{\pgfpoint{\pgf@circ@res@right}{0pt}}
        \pgfusepath{draw}
    
        \pgfsetlinewidth{2\pgflinewidth}
        
        \pgfpathmoveto{\pgfpoint{\pgf@circ@res@left}{0pt}}
        \pgfpatharc{180}{0}{\pgf@circ@res@right}
        \pgfusepath{draw}
    }

\def\halfgyratorpath#1{\pgf@circ@bipole@path{halfgyrator}{#1}}
\tikzset{half gyrator/.style={\circuitikzbasekey, /tikz/to path=\halfgyratorpath}}
\makeatother

\usepackage[colorlinks=true, urlcolor=violet, linkcolor=blue, citecolor=red, hyperindex=true, linktocpage=true]{hyperref}
\usepackage[capitalise,compress]{cleveref}

\newcommand{\fg}[0]{\mathfrak g}
\newcommand{\fG}[0]{\mathfrak G}

\newtheorem{thm}{Theorem}

\theoremstyle{definition}
\newtheorem{defn}[thm]{Definition}

\makeatletter
\renewcommand{\p@subsection}{}
\renewcommand{\p@subsubsection}{}
\makeatother

\usepackage{mathtools}

\makeatletter
\pgfcircdeclarebipolescaled{instruments}
{
    \anchor{text}{\pgfextracty{\pgf@circ@res@up}{\northeast}
        \pgfpoint{-.5\wd\pgfnodeparttextbox}{
            \dimexpr.5\dp\pgfnodeparttextbox+.5\ht\pgfnodeparttextbox+\pgf@circ@res@up\relax
        }
    }
}
{\ctikzvalof{bipoles/oscope/height}}
{josephson}
{\ctikzvalof{bipoles/oscope/height}}
{\ctikzvalof{bipoles/oscope/width}}
{
    \pgf@circ@setlinewidth{bipoles}{\pgfstartlinewidth}
    \pgfextracty{\pgf@circ@res@up}{\northeast}
    \pgfextractx{\pgf@circ@res@right}{\northeast}
    \pgfextractx{\pgf@circ@res@left}{\southwest}
    \pgfextracty{\pgf@circ@res@down}{\southwest}
    \pgfmathsetlength{\pgf@circ@res@step}{0.25*\pgf@circ@res@up}
    \pgfscope
        \pgfpathrectanglecorners{\pgfpoint{\pgf@circ@res@left}{\pgf@circ@res@down}}{\pgfpoint{\pgf@circ@res@right}{\pgf@circ@res@up}}
        \pgf@circ@draworfill
    \endpgfscope
    \pgfscope
      \pgfpathmoveto{\pgfpoint{\pgf@circ@res@left}{\pgf@circ@res@up}}
      \pgfpathlineto{\pgfpoint{\pgf@circ@res@right}{\pgf@circ@res@down}}
      \pgfpathmoveto{\pgfpoint{\pgf@circ@res@right}{\pgf@circ@res@up}}
      \pgfpathlineto{\pgfpoint{\pgf@circ@res@left}{\pgf@circ@res@down}}
      \pgfusepath{draw}
    \endpgfscope
}
\def\pgf@circ@josephson@path#1{\pgf@circ@bipole@path{josephson}{#1}}
\tikzset{josephson/.style = {\circuitikzbasekey, /tikz/to path=\pgf@circ@josephson@path, l=#1}}

\newcommand{\PP}{\mathbb{P}}
\newcommand{\LL}{\mathcal{L}}

\usepackage{dsfont}

\begin{document}
\author{Ciro Salcedo}
\email{ciro@princeton.edu}
\affiliation{Princeton Quantum Initiative, Princeton University, Princeton, NJ 08544, USA}
\author{Sasha Cocquyt}
\affiliation{Princeton University, Department of Electrical and Computer Engineering, Princeton University, Princeton, NJ 08544, USA}
\author{Andrew A. Houck}
\affiliation{Princeton Quantum Initiative, Princeton University, Princeton, NJ 08544, USA}
\affiliation{Princeton University, Department of Electrical and Computer Engineering, Princeton University, Princeton, NJ 08544, USA}
\author{Andrew Osborne}
\affiliation{Princeton University, Department of Electrical and Computer Engineering, Princeton University, Princeton, NJ 08544, USA}
\title{Gyrators for superconducting circuit design}
\begin{abstract}  

Assuming only the solubility of Kirchhoff's laws, the conservation of energy, and the form of the constitutive relations, we present a flux-charge symmetric Hamiltonian formalism that always admits a symplectic reduction for nonsingular, well-posed circuits involving gyrators, inductors, and capacitors (LCG circuits) with arbitrary non-linearity. The formalism introduces a notion of circuit duality on non-linear non-reciprocal circuits and a family of point transformations that inserts, removes, and moves reciprocal elements across gyrators while preserving the Hamiltonian spectrum. With these transformations, we extend Tellegen's classic result that gyrators make reciprocal elements behave as their dual to non-linear reciprocal elements such as the Josephson junction and quantum phase slip. Furthermore, we present conditions under which non-reciprocal circuits can be transformed to dynamically equivalent reciprocal circuits.

\end{abstract}
\maketitle
\tableofcontents
\section{Introduction}
Superconducting circuits are a leading platform for quantum information processing and simulation \cite{krantz_quantum_2019,Arute2019QuantumSU,kjaergaard_superconducting_2020,devoret_fqi}, necessitating systematic Hamiltonian formalisms describing the dynamics of lumped-element circuits. Historically, circuit theory has guided the discovery of the transmon, fluxonium, GKP, and $0-\pi$ qubit architecture \cite{koch2007,kitaev_encoding,manucharyan_fluxonium_2009,0pi}. Moreover, there are proposals for protected qubits (such as the dualmon \cite{le_doubly_2019}) that require increasingly exotic circuit elements which cannot be readily engineered in near-term devices, such as quantum phase slip junctions. As circuit QED (cQED) architectures are fundamentally designed as lumped-element circuits, a Hamiltonian description of their dynamics directly informs the engineering of qubits and other quantum information processing systems. 

Non-reciprocity, defined as the lack of transmission symmetry when the source and detector are swapped, is fundamental to several superconducting-circuit architectures. The most immediate is qubit readout: the circulators and isolators that protect a qubit from amplifier back-action are intrinsically non-reciprocal devices. However their massive ferrite components occupy substantial real-estate at the mixing-chamber stage of a dilution refrigerator. Using on-chip superconducting non-reciprocal elements, instead, leaves more space for additional qubits or control hardware, but requires a quantization framework for non-reciprocal circuits. Recent work uses Floquet engineering of non-reciprocal couplings to realize high-fidelity parametric gates between detuned qubits \cite{unigate,pargate}, and non-reciprocal elements appear as essential components in proposals for noise-protected qubits \cite{gyratorqubit,grimm_stabilization_2020,stable}. Non-reciprocity also provides a means to break time-reversal symmetry in simulated lattice Hamiltonians, making them a platform for studying chiral topological Bose-Hubbard physics \cite{martinez,roushan_chiral_2017}.
 
 The gyrator is a linear, non-reciprocal, two-port element that couples the voltage across one port to the current through the other. Widely-used non-reciprocal circuit elements, such as the circulator or isolator, can be synthesized from gyrators and reciprocal elements \cite{carlin}. Moreover, gyrators can model the dynamics of a circuit under Floquet drives (often used to effect synthetic magnetic fields for analog simulation) \cite{martinez,KollarNR,Rosen2024}. Finally, gyrators provide a means to construct circuit elements in otherwise inaccessible parameter regimes. Since its proposal, the gyrator has been known to make circuit elements behave like their electromagnetic dual \cite{tellegen}. Thus, a linear inductor can be realized in a physically small space with a gyrator and a capacitor rather than a relatively massive magnetic coil. The quantum phase slip junction, whose direct fabrication remains a challenge, may instead be realized with a Josephson junction and a gyrator.

The historically dominant circuit quantization schemes, the method of nodes and the method of loops, define the Lagrangian of a circuit in terms of a single coordinate (either node fluxes or loop charges, respectively) and recover the conjugate variable by the Legendre transform \cite{vool_introduction_2017}. However, these methods are plagued by pathological circuit topologies that depend on the coordinate choice and obstruct quantization.  Recent work has extended the Burkard-Koch-DiVincenzo (BKD) framework to non-reciprocal elements \cite{pr2019,rymarz_consistent_2022} and addressed singular circuits through perturbative parasitic capacitances \cite{rymarz_consistent_2022,pr_comment}, but these approaches still demand manual reduction for a broad class of pathological topologies \cite{pr2019} and may implicitly assume extended coordinates (defined on $\mathbb{R}$ rather than $S^1$), an assumption violated by the transmon \cite{pr_comment}. The geometric Faddeev--Jackiw (FJ) formalism solves the general quantization problem for conservative non-reciprocal circuits by constructing Hamiltonians from the kernel of a pre-symplectic two-form, sidestepping parasitic regularization entirely \cite{pr2025}. Meanwhile, the flux-charge symmetric formalism of \cite{fluxcharge} guarantees a Hamiltonian for any well-posed LC circuit by expressing potentials in both flux and charge coordinates, but does not extend to non-reciprocal elements. However, its quantization scheme operates at the Lagrangian level offering physical insights into circuits that are obscured by geometric FJ.

In this paper, we extend the flux-charge symmetric formalism of \cite{fluxcharge} to non-reciprocal circuits. We define a flux-charge symmetric Lagrangian for LCG circuits that admits a symplectic reduction for any nonsingular, well-posed circuit, regardless of non-linearity, together with a taxonomy of constraints that reveals their topological origin before any algebra is performed. Unlike the geometric Faddeev--Jackiw scheme of \cite{pr2025}, which reduces a pre-symplectic form before defining a Lagrangian, our formalism encodes the circuit's structural information in the Lagrangian. Through this, our circuit description enables a series of canonical transformations on their Lagrangian. With these, we extend the notion of circuit duality to non-linear LCG circuits and include a partial-dual point transformation that inserts or removes gyrators from a circuit while preserving the Hamiltonian spectrum. With partial duality, we can define continuous isospectral families of circuits and find an extension of Tellegen's element-mimicking result \cite{tellegen} to non-linear elements such as the Josephson junction and quantum phase slip. With this, circuit dynamics in otherwise inaccessible parameter regimes can be realized. Moreover, we provide a set of gyrator reduction rules congruent with our formalism that act as canonical transformations on the circuit Lagrangian. We use these to provide a graph-theoretic classification of reciprocally-simulable circuits via the non-reciprocal coupling graph (NCG), a derived multigraph whose vertices are the really-connected components (defined in Section \ref{sec:notation}) of the circuit and whose edges record the gyrators connecting them.  We prove that any circuit whose NCG is bipartite admits a unitarily equivalent realization by capacitors, inductors and ideal transformers, and that when the NCG is a forest the realization contains no gyrators at all.

\subsection{Assumptions and circuit descriptors}\label{sec:assumptions}

Three distinct pathologies obstruct circuit quantization: rank-deficient kinetic matrices, dynamically-trivial circuits, and ill-posed constitutive relations. Here, we reconcile the terminology used to describe these problems in the past with our work. However, we must first define the scope of our formalism. If a circuit is comprised of only gyrators, inductors, and capacitors, we say that it is an LCG circuit, our central object of study. For concreteness, we count the coherent quantum phase slip and Josephson junctions as capacitors and inductors respectively. Although we do not explicitly include transformers, they appear in our formalism as composite objects from gyrator cascades (Section \ref{sec:combo}). 

We call a circuit well-posed if its equations of motion (EOM) admit a unique physical solution up to gauge freedom. Thus if a circuit admits no physical solutions or admits multiple distinct solutions it is called ill-posed. Non-linear elements such as a QPS that is not in series with an inductor or a JJ without a parallel capacitance generate constitutive relations of the form $\sin(x) = c x$ for some arbitrary constant $c <1$ for some coordinate $x$ that is not constrained elsewhere. The addition of parasitic elements, series inductance or parallel capacitance, can resolve this in a reciprocal circuit. Meanwhile, two voltage sources with distinct voltages in parallel yield equations that admit no solution. We discuss ill-posedness in the presence of gyrators later, in Section \ref{sec:transformations}. There, we define a transformation on circuit Lagrangians that, as a point transformation, must preserve ill-posedness.

Some well-posed circuits may still lack a Hamiltonian description because they lack dynamics entirely; we call these singular circuits. We discuss singularity in the presence of gyrators and expand on the class of singular circuits in Section \ref{sec:combo}. 

Previously, singularity has been used to refer to the kinetic matrices of a given Lagrangian formalism. In particular, the method of nodes and the method of loops use a capacitance or inductance matrix, respectively, whose inverse defines the Legendre transform \cite{vool_introduction_2017}.  If these matrices are rank-deficient, which are caused by pure-capacitor loops in the method of nodes or pure-inductor cuts in the method of loops, then the Hamiltonian cannot be found without a manual choice of the circuit's canonical coordinates. As we discuss below, our flux-charge symmetric Lagrangian encodes a matrix that is analogous to a kinetic matrix and is, in general, rank-deficient. Nevertheless, our formalism admits a Hamiltonian description  as shown in Section \ref{sec:vars}. 

Finally, we discuss units in our scheme. The flux-charge symmetric Lagrangian developed in the following sections treats node fluxes and loop charges on equal footing, and the duality transformation of Section~\ref{sec:transformations} exchanges them outright. To make this exchange dimensionally consistent we adopt the convention that a reference conductance $G_0$ has been set to unity: with $G_0 = 1$, the ratio of flux units to charge units is dimensionless, so flux and charge become numerically interchangeable. All Lagrangians and Hamiltonians in this paper should be read as scalar-valued functions of dimensionless coordinates in this convention. Physical predictions are recovered by restoring $G_0$ in any expression where dimensions must be carried explicitly; for a circuit containing a single gyrator of ratio $G_*$, the choice $G_0 = G_*$ makes that gyrator's partial dual map (Section \ref{sec:rules}) numerically parameter-preserving; with several gyrators, or when a gyration ratio is varied as in Section \ref{sec:family}, $G_0$ is held fixed and each $G_\fg/G_0$ is a dimensionless coupling.
\subsection{Notation}\label{sec:notation}
We first present a brief overview of our notation essential to the main body of this text. A summary table is provided at the end (Table \ref{tab:notation}) for reference. We relegate the majority of formal definitions and proofs to the appendices. 

A circuit is represented as an oriented planar embedded multigraph $G = (\mathcal V, \mathcal E, \mathcal F)$, with vertices $\mathcal V$, edges $\mathcal E$, and loops $\mathcal F$. We assume that each connected subgraph of a circuit's graph is planar so that loops coincide with faces of its component's embedding and so that, following \cite{fluxcharge}, the duality transformation in Section \ref{sec:transformations} is well-defined. As every loop bounds a unique face of its component's embedding, we do not distinguish between faces and loops and we use $\mathcal F$ for both, a prescription justified in \cite{fluxcharge}. For $v $ in $\mathcal V$, we denote $|v\rangle$ to mean a vector in a $|\mathcal V|$ dimensional real vector space satisfying $\langle u | v\rangle = \delta_{uv}$. We adopt similar meanings for $|e\rangle$ and $|l\rangle$ which span the vector spaces $\mathbb{R}^{|\mathcal{E}|}$ and $\mathbb{R}^{|\mathcal{F}|}$, respectively.

The incidence matrix $A$ and orientation matrix $B$ are linear maps on the vector spaces defined above that also satisfy a boundary map condition: 

\begin{subequations}
\label{eqn:boundary_matrices}
\begin{align}
    A_{ev} &=
  \begin{cases}
    +1 & e\text{ directed toward } v \\
    -1 & e\text{ directed away from } v \\
    0  & v \notin e
  \end{cases} \label{eqn:A}\\
     B_{le} &=
  \begin{cases}
    +1 & e, l \text{ oriented alike} \\
    -1 & e, l \text{ oriented unalike} \\
    0  & e \notin l.
  \end{cases}\label{eqn:B}  
\end{align}
\end{subequations}
Equivalently, $A_{ev} = \langle e|A|v\rangle$ and $B_{le} = \langle l|B|e\rangle$. The defining algebraic property of $A$ and $B$ is exactness,
\begin{equation}\label{eqn:exact}
  \ker(B) = \mathrm{im}(A),
\end{equation}
which encodes Kirchhoff's laws as the statement that the composition $BA$ vanishes because the chain complex 
\begin{equation}
    \mathbb R^{|\mathcal V|} \xrightarrow{\,A\,} \mathbb R^{|\mathcal E|} \xrightarrow{\,B\,} \mathbb R^{|\mathcal F|}
\end{equation}
is exact at $\mathbb R^{|\mathcal E|}$ \cite{bamberg_sternberg_v2, branin1966,osborne2023symplectic, fluxcharge}.

We orient circuits so that loops in $\mathcal F$ bound faces uniformly on each component's sphere; see Figure \ref{fig:generic} for an example of the convention. We may further consider a circuit to be constructed from many disjoint connected subgraphs. As will become clear later, our gyrators allow coupling of the dynamics between graph-theoretically disconnected sub-circuits. 
We say that two vertices $u_0,u_1 \in \mathcal{V}$ are \textit{really-connected} if there exists a sequence of vertices $(v_0,v_1,...,v_n)$ such that $v_0 = u_0, \;v_n = u_1$, and for each $1 \leq i \leq n$, the edge $(v_{i-1},v_i) \in \mathcal{E}$ or $(v_i,v_{i-1}) \in \mathcal{E}$. In other words, vertices are really-connected if a path of edges exist between them. This is identical to vertices being "connected" in the graph-theoretic sense; we add this qualifier as gyrators can couple the dynamics between disjoint subgraphs and should not be thought of as "really-connecting" the subgraphs. 

Really-connectedness is an equivalence relation on $\mathcal V$ whose equivalence classes are exactly the $k \geq 1$ disjoint connected components which we will denote with $\mathcal V^{(i)}$ for $i = 1,2,\dots,k$. This partition of the vertices naturally induces a partition of $\mathcal E$: 
\begin{equation}
  \mathcal E^{(i)} = \{e = (u,v) \in \mathcal E \,:\, \{u,v\} \subset \mathcal V^{(i)}\}
\end{equation} 

More generally, $\mathcal X^{(i)}$ with a parenthesized index denotes the restriction of $\mathcal X$ to the $i$-th really-connected component of the graph. Each really-connected component $G^{(i)} = (\mathcal V^{(i)}, \mathcal E^{(i)})$ is regarded as a planar graph embedded on its own sphere $S^2$, and we define $\mathcal F^{(i)}$ to be the face set of that intrinsic embedding, including its own outer face. We take the global loop set to be the disjoint union
\begin{equation}\label{eqn:floopset}
    \mathcal F \;=\; \bigsqcup_{i} \mathcal F^{(i)},
\end{equation}
so that a circuit drawing with $k$ really-connected components carries $k$ outer-face charges rather than one shared outer face. We will maintain our convention of using a superscript to distinguish between different really-connected components. The boundary maps of the connected subgraphs then assemble into the boundary maps of the entire graph,
\begin{subequations}\label{eqn:bmapdecomp}
  \begin{align}
    B &= \bigoplus_{i} B^{(i)} \\
    A &= \bigoplus_i A^{(i)},
  \end{align}
\end{subequations}
with each block exact, $\ker B^{(i)} = \mathrm{im}\, A^{(i)}$, such that $\ker B = \mathrm{im}\, A$. By Euler's formula for a connected planar graph embedded in a sphere, each component satisfies $|\mathcal V^{(i)}| - |\mathcal E^{(i)}| + |\mathcal F^{(i)}| = 2$; each component loop set therefore carries one loop more than its number of independent loops, so that
\begin{equation}\label{eqn:facecount}
    |\mathcal F| = |\mathcal E| - |\mathcal V| + 2k .
\end{equation}
This extra loop per component is what supplies the gauge freedoms counted in Section
\ref{sec:vars} and the duality of Section \ref{sec:transformations}. Explicitly,
\begin{subequations}\label{eqn:gaugekernels}
  \begin{align}
    \ker A^{(i)} &= \mathrm{span}\Big\{\sum_{v \in \mathcal V^{(i)}} \ket v\Big\}\\
    \ker \left(B^{(i)}\right)^\intercal &= \mathrm{span}\Big\{\sum_{l \in \mathcal F^{(i)}} \ket l\Big\},
  \end{align}
\end{subequations}
the first by connectedness of the component and the second because each edge borders two of the component's faces with opposite induced orientations, with no further relation among its loops by (\ref{eqn:facecount}) and rank-nullity. Each kernel is one-dimensional, so the number of null vectors of this sort is precisely twice the number of really-connected components. 

The manifestation of Kirchhoff's rules becomes clearest when we adopt the standard convention that charges and fluxes will be denoted by $q$ and $\phi$ respectively. Particularly, we define these charges and fluxes on topological objects as indicated by its subscript; thus we introduce node fluxes $\phi_v$ and loop charges $q_l$.
We write them as
\begin{subequations}
    \begin{align}
      |\phi\rangle &= \sum_v \phi_v|v\rangle\\
      |q\rangle &= \sum_l q_l|l\rangle
    \end{align}
\end{subequations}
Edge variables are derived from boundary maps acting on these variables: the edge flux on edge $e$ is $\Phi_e = \sum_v A_{ev}\phi_v = \langle e|A|\phi\rangle$, and the edge charge is $Q_e = \sum_l q_l B_{le} = \langle q|B|e\rangle$. Together, $|\phi\rangle$ and $|q\rangle$ parameterize a redundant phase space with more coordinates than a complete description of dynamics requires. We describe the reduction of coordinates to a phase space in Section \ref{sec:vars}. 

As this work restricts to LCG circuits, we partition the set of edges in a circuit diagram according to the form of their constitutive relations:
\begin{equation}
  \mathcal E = \mathcal C \sqcup \mathcal I \sqcup \mathcal G,
\end{equation}
where $\mathcal C$, $\mathcal I$, and $\mathcal G$ are the sets of capacitive, inductive, and gyrative edges respectively. 
We write
$\mathbb P_{\mathcal C}$, $\mathbb P_{\mathcal I}$, $\mathbb P_{\mathcal G}$ for
the orthogonal projectors onto the corresponding edge-space subspaces, so that
$\mathbb P_{\mathcal X} = \sum_{e \in \mathcal X} |e\rangle\langle e|$ for
$\mathcal X \in \{\mathcal C, \mathcal I, \mathcal G\}$ and importantly:
\begin{equation}\label{eqn:projector}
    \mathbb P_{\mathcal C} + \mathbb P_{\mathcal I} + \mathbb P_{\mathcal G}
= \mathbb I,
\end{equation}
the identity matrix. 
We may further partition edges by both their type and by the really-connected component they inhabit:
\begin{subequations}
  \begin{align}
    \mathcal C^{(i)} &= \mathcal C \cap \mathcal E^{(i)} \\ 
    \mathcal I^{(i)} &= \mathcal I \cap \mathcal E^{(i)} \\ 
    \mathcal G^{(i)} &= \mathcal G \cap \mathcal E^{(i)} 
  \end{align}
\end{subequations} 
\begin{figure}
  \centering
\begin{circuitikz}[scale=2]
    \draw[thick] (-.5, 1) to (0, 1) to[half gyrator, i_=$e_1$] (0, 0) to (-.5, 0);
    
    \filldraw[black] (-.5, 1) circle (1pt) node[anchor=east]{$v_1$};
    \filldraw[black] (-.5, 0) circle (1pt) node[anchor=east]{$v_2$};
    
    \coordinate (ArrowStart) at (0, 1.1);

\begin{scope}[xshift=10]
    \draw[thick] (1, 1) to (.5, 1) to[half gyrator, mirror, i^=$e_2$] (.5, 0) to (1, 0);
    \filldraw[black] (1, 1) circle (1pt) node[anchor=west]{$v_3$};
    \filldraw[black] (1, 0) circle (1pt) node[anchor=west]{$v_4$};

    \coordinate (ArrowEnd) at (.5, 1.1);
\end{scope}

    \draw[-latex, thick, shorten <=10pt, shorten >=10pt] (ArrowStart) -- node[above,font=\Large] {$G$} (ArrowEnd);

\end{circuitikz}
    \caption{The gyrator symbol proposed in \cite{tellegen}. We include the labeling of vertices and edges to demonstrate the notion that a circuit can be thought of as an object defined on an underlying graph. To reverse the orientation of either edge in the gyrator would flip the sign of the gyration ratio.}\label{fig:symbol}
\end{figure}

Unlike the capacitor and inductor, which are one-port elements, the gyrator is a two-port circuit element coupling the voltage of one port to the current of the other. We therefore regard gyrators graph-theoretically as ordered pairs of edges (Figure \ref{fig:symbol}). The set of gyrator edge pairs of a circuit, which we denote $\fG$, is a collection of ordered edge pairs
\begin{equation}
    \fG \subseteq \mathcal G \times \mathcal G
\end{equation}
that are pairwise disjoint and together cover $\mathcal G = \bigcup_{\mathfrak g \in \fG} \mathfrak g$. 
To each gyrator edge pair $\mathfrak g = (e_1, e_2) \in \fG$ we associate a gyration ratio $G_{\mathfrak g} \in \mathbb R$ with units of conductance; reversing the ordering of an edge pair flips the sign of $G_{\mathfrak g}$, so the edge pairs are taken to be ordered. In conventional notation, the gyrator is additionally labeled by its gyration ratio and an arrow that gives the sign of the constitutive relations. Although we include it in Figure \ref{fig:symbol}, the same information is encoded by the orientation of a gyrator's edges in our notation and thus we exclude it thereafter. Supposing that the gyrator couples the flux and charge through $e_1 = (v_1,v_2)$ to the charge and flux through $e_2 = (v_3,v_4)$, the gyrator has the constitutive relations 
\begin{subequations}\label{eqn:gyrconst}
  \begin{align}
    G_{\mathfrak g}\dot \Phi_{e_1} &=  \dot Q_{e_2} \\
    -G_{\mathfrak g}\dot \Phi_{e_2} &=  \dot Q_{e_1}.
  \end{align}
\end{subequations} 
Only one condition in (\ref{eqn:gyrconst}) is strictly necessary, since the second follows from the first and conservation of energy. Note that the two edges in a gyrator edge pair need not lie in the same really-connected component, thus necessitating the distinction provided by the term.

\begin{table}[h]
  \centering
  \caption{Summary of notation used throughout the paper. }
  \label{tab:notation}
  \begin{tabular}{ll}
    \hline\hline
    Symbol & Meaning \\
    \hline
    $\mathcal V, \mathcal E, \mathcal F$ & Vertices, edges, loops of the circuit graph \\
    $A, B$ & Incidence matrix; orientation matrix \\
    $|v\rangle, |e\rangle, |l\rangle$ & Basis vectors over $\mathbb{R}^{|\mathcal V|}$, $\mathbb{R}^{|\mathcal E|}$, $\mathbb{R}^{|\mathcal F|}$ \\
     \hline
    $\mathcal C, \mathcal I, \mathcal G$ & Capacitive, inductive, and gyrative edges \\
    $\mathbb P_{\mathcal C}, \mathbb P_{\mathcal I}, \mathbb P_{\mathcal G}$ & Projectors onto each edge subset \\
    \hline
    $\mathfrak g$             & A gyrator edge pair \\
    $\fG$             & The set of gyrator edge pairs in a circuit \\
    $G_{\mathfrak g}$ & Gyration ratio of $\mathfrak g$ \\
    \hline
    $\phi_v, q_l$ & Node fluxes; loop charges \\
    $|\phi\rangle, |q\rangle$ & Flux and charge coordinate vectors \\
    $\Phi_e, Q_e$ & Edge flux and edge charge \\
    \hline\hline
  \end{tabular}
\end{table}

\subsection{Brief review of flux-charge symmetric circuit mechanics}\label{sec:review}
Lagrangian systems admit a simple treatment of constraints via Lagrange multipliers, while Hamiltonian descriptions require Dirac brackets to do the same. Lumped-element circuit models generally have many nontrivial constraints arising from either Kirchhoff's laws or constitutive relations manifesting as an excess of coordinate freedom. As such, it is not clear how to produce a Hamiltonian directly from a circuit diagram. Instead we begin with a Lagrangian description and integrate out constraints, incrementally paring down configuration space into a suitable phase space in what amounts to a Hamiltonian principle of least action.

The flux-charge symmetric Lagrangian for reciprocal LC circuit \cite{fluxcharge} is given by 
\begin{equation}\label{eqn:fc}
  \mathscr L = \sum_{l,v} q_l M_{lv} \dot\phi_v
  - E
\end{equation}
where 
\begin{equation}\label{eqn:connection}
  M = \tfrac{1}{2} B(\mathbb P_{\mathcal C} - \mathbb P_{\mathcal I}) A,
\end{equation}
is the so-called connection matrix. 
The first term of (\ref{eqn:fc}) is called the symplectic term, and $E$ is the sum of energies stored in circuit elements written entirely in terms of $\phi_v$ and $q_l$ variables, which we will call the energetic term. 

The principle of least action gives 
\begin{subequations}\label{eqn:lac}
\begin{align}
    0 &= \sum_v M_{lv} \dot \phi_v  - \frac{\partial E}{\partial q_l}, \\ 
    0 &= -\sum_l   \dot q_l M_{lv} - \frac{\partial E}{\partial \phi_v}.
\end{align}    
\end{subequations}
Supposing for the moment that the connection matrix is invertible, we define 
\begin{equation}
\{f,g\} = \sum_{l,v} (M^{-1})_{vl}\left( \frac{\partial f}{\partial \phi_v}\frac{\partial g}{\partial q_l} - \frac{\partial f}{\partial q_l}\frac{\partial g}{\partial \phi_v}\right),
\end{equation}
for any function $f$ and $g$.
Hence, (\ref{eqn:lac}) can be rewritten succinctly as $\dot f = \{f, E\}$ for any function $f$ of configuration variables. 
The procedure of inverting $M$ in order to define a Poisson bracket is precisely the Legendre transform.
In reality $M$ fails to be invertible, so we must identify vectors $c_v$  such that $\sum_{l,v} M_{lv} c_v = 0$, which give rise to algebraic constraints of the form 
\begin{equation}
    \sum_{v } c_v \frac{\partial E}{\partial \phi_v} = 0.
\end{equation}
Solving such a constraint and rewriting the Lagrangian on-shell reduces the number of coordinates in the symplectic term which must nonetheless remain linear, effectively reducing the size of the connection matrix without reducing its rank. Repeating this procedure for all such null vectors of $M$ yields a symplectic term with a genuinely invertible connection matrix, allowing a straightforward Legendre transform and subsequent canonical quantization.
In the flux-charge symmetric Lagrangian formalism, there is one such algebraic constraint for each independent purely capacitive loop, purely capacitive cut, purely inductive loop, and purely inductive cut, together with the gyrator-mixing constraints of Section \ref{sec:vars}. The assumption of well-posedness is equivalent to the assertion that all such constraints admit unique solutions. 

The energetic term of the Lagrangian can be further decomposed as 
\begin{equation}
    E = \sum_{e \in \mathcal C} E_e^C\left( \langle q | B|e \rangle \right) + \sum_{e \in \mathcal I} E_e^I\left(\langle e|A |\phi\rangle\right)
\end{equation}
where $E_e^C$ ($E_e^I$) is the energy stored in the capacitor (inductor) on edge $e$.
Importantly, the flux-charge symmetric Lagrangian formalism admits a straight-forward notion of circuit duality. 
If $\mathscr L(q_l, \phi_v)$ is the flux-charge symmetric Lagrangian of some circuit, then $\mathscr{L}(\phi_v, q_l)$ is the flux-charge symmetric Lagrangian of its dual. 

\section{Flux--charge symmetric description of gyrators}\label{sec:flux_charge_LCG}

We seek a generic procedure for writing Lagrangians with gyrators that produces the constitutive relations while satisfying Kirchhoff's laws for nonsingular well-posed circuits. 
For every gyrator edge pair $\mathfrak{g} \in \mathfrak{G}$, we introduce the following term to represent its contribution to the Lagrangian:
\begin{equation}\label{eqn:gyrterm}
  \mathscr{L}_\mathfrak{g} = \frac{G_\mathfrak{g}}{2}\sum_{u,v}\phi_u A_{e_1u}A_{e_2v}\dot\phi_v - \frac{1}{2G_\mathfrak{g}}\sum_{l,l'} q_l B_{le_1}B_{l'e_2}\dot q_{l'}.
\end{equation}
For a gyrator edge pair $\fg = (e_1, e_2)$ with ratio $G_\fg$, we set $G_{(e_2,e_1)} := -G_{(e_1,e_2)}$. Then $\LL_\fg$ is unchanged under reversal of the pair ordering and reversal of the orientation of either gyrative edge accompanied by $G_\fg \to -G_\fg$. 

We may alternatively write the contribution of $\LL_\fg$ in ket-notation, equal to Equation \ref{eqn:gyrterm} up to a total time derivative, to make the properties of the boundary maps more explicit. For $\mathfrak g = (e_1, e_2) \in \fG$, define
\begin{equation}\label{eqn:gdefn}
  \Gamma_{\mathfrak g} = |e_1\rangle\langle e_2| - |e_2\rangle\langle e_1|,
\end{equation}
which we will refer to as the gyrator's antisymmetric form. We then may write the contribution of a single gyrator to a Lagrangian as:
\begin{equation}
  \mathscr{L}_\mathfrak{g} = \frac{1}{4} \left(G_\mathfrak{g}\langle \phi| A^\intercal \Gamma_\mathfrak{g} A |\dot\phi\rangle - G_\mathfrak{g}^{-1}\langle q| B \Gamma_\mathfrak{g} B^\intercal |\dot q\rangle \right)
\end{equation} 
Direct computation gives $\Gamma_{\mathfrak g}^2 = -(|e_1\rangle\langle e_1| + |e_2\rangle\langle e_2|)$, and since distinct gyrator edge pairs are disjoint, $\Gamma_{\mathfrak g}\Gamma_{\mathfrak g'} = 0$ whenever $\mathfrak g \neq \mathfrak g'$. Summing over $\fG$ yields the identity
\begin{equation}\label{eqn:gammaidents}
  \sum_{\mathfrak g \in \fG} \Gamma_{\mathfrak g}^2 = -\mathbb P_{\mathcal G}
\end{equation}
which we invoke throughout Section \ref{sec:flux_charge_LCG}. 
 
We will study the circuit Lagrangian 
\begin{equation}\label{eqn:no_omega_lagrangian}
  \mathscr L = \langle q | M | \dot \phi\rangle + \sum_{\mathfrak{g} \in \fG} \mathscr{L}_\mathfrak{g} - \sum_{e \in \mathcal C} E_e^C - \sum_{e \in \mathcal I} E_e^I
\end{equation} 
where $E_e^C$ is the energy associated with a capacitor on edge $e$ and must depend only upon $Q_e$, and likewise  $E_e^I $ is the energy stored in an inductor on edge $e$ and may depend only upon $\Phi_e $. 

We have chosen $\mathscr L_{\mathfrak g}$ as notation to emphasize the fact that gyrators do not store energy. Moreover, we remark that the symplectic term together with the contributions to the Lagrangian from gyrators admits an uncomplicated block structure reminiscent of (\ref{eqn:fc}). Namely, 
\begin{equation}
\langle q | M |\dot\phi\rangle + \sum_{\mathfrak g \in \fG} \mathscr L _{\mathfrak g} =
\frac{1}{4}
\begin{pmatrix} \langle \phi | & \langle q | \end{pmatrix}
\begin{pmatrix}
    \sum_{\mathfrak{g} \in \mathfrak{G}} G_\mathfrak{g}A^\intercal \Gamma_\mathfrak{g}A & -2M^\intercal \\
    2M & -\sum_{\mathfrak{g} \in \mathfrak{G}} G_\mathfrak{g}^{-1}B\Gamma_\mathfrak{g}B^\intercal
\end{pmatrix}
\begin{pmatrix} |\dot \phi \rangle \\ | \dot q \rangle \end{pmatrix},
\end{equation}
which is clearly antisymmetric.

There are a pair of block decompositions that will be valuable to us: 
firstly, 
\begin{equation}\label{eqn:omega}
\Omega = 
    \begin{pmatrix}
        \sum_{\mathfrak{g} \in \mathfrak{G}} G_\mathfrak{g}A^\intercal \Gamma_\mathfrak{g}A & -2M^\intercal \\ 2M & -\sum_{\mathfrak{g} \in \mathfrak{G}} G_\mathfrak{g}^{-1}B\Gamma_\mathfrak{g}B^\intercal
    \end{pmatrix}
\end{equation}
is antisymmetric and immediately calls to mind \cite{faddeevjackiw} and the technology that can be invoked therein. 
Secondly, we may write
\begin{equation}\label{eqn:upsilon}
    \sigma \Omega =    ( B \oplus A^\intercal) \Theta (A \oplus B^\intercal)
\end{equation}
where $\sigma: \mathbb R^{|\mathcal V|}\oplus\mathbb R^{|\mathcal F|} \to \mathbb R^{|\mathcal F|}\oplus\mathbb R^{|\mathcal V|}$, $(x,y)\mapsto(y,x)$, is the block transposition, and with
\begin{equation}\label{eqn:Thetablock}
    \Theta = 
    \begin{pmatrix}
         \mathbb{P}_\mathcal{C} - \mathbb{P}_\mathcal{I} & -\sum_{\mathfrak{g}\in \mathfrak{G}} G_\mathfrak{g}^{-1} \Gamma_\mathfrak{g} \\ 
         \sum_{\mathfrak{g}\in \mathfrak{G}} G_\mathfrak{g} \Gamma_\mathfrak{g} & \mathbb{P}_\mathcal{I} - \mathbb{P}_\mathcal{C} \\ 
     \end{pmatrix}.
\end{equation}
Of note, $\Theta$ is an involution satisfying $\Theta^2 = 1$ and hence $\sigma \Omega$ is the involution $\Theta$ conjugated by a pair of maps satisfying an exactness condition. As $\sigma$ is an isometry, $\ker\Omega = \ker\sigma\Omega$.

\subsection{Correctness and coordinate choices}\label{sec:coor}

A Lagrangian is ``correct'' if and only if it produces the correct EOM on configuration space. In our case, these EOM are precisely Kirchhoff's laws and the constitutive relations of circuit elements. 
We begin by showing that the circuit constitutive relations follow. 

Take (\ref{eqn:no_omega_lagrangian}), and observe that, for a particular loop $l$, 
\begin{equation}\label{eqn:capconst}
    \frac{\delta}{\delta q_l} \int \mathrm d t \mathscr L = \langle l | M | \dot \phi \rangle +  \frac{\delta}{\delta q_l}\int\mathrm dt \sum_{\fg \in \mathfrak{G}}\mathscr L_\fg  - \sum_{e \in \mathcal C} \frac{\partial E_e^{C}}{\partial q_l}
\end{equation}
Rewriting $M = B(\mathbb P_{\mathcal C} + \frac{1}{2}\mathbb P_{\mathcal G} )A$, we find that 
\begin{equation}\label{eqn:vecconst}
     \frac{\delta}{\delta q_l} \int \mathrm d t \mathscr L = 
     \sum_{e \in \mathcal C, v} B_{le} \left[ 
       A_{ev} \dot\phi_v - (E_e^C)' 
     \right]
     + 
     \frac{1}{2}\sum_{e \in \mathcal G}B_{le}\left[
        A_{ev}\dot\phi_v - G_{(e,e')}^{-1} \sum_{l'}B_{l'e'}\dot q_{l'}
        \right],
\end{equation}
where $(E_e^C)'$ denotes the derivative of $E_e^C$ with respect to its argument $\sum_l q_l B_{le}$ and $e'$ is the partner in a gyrator edge pair to $e$ in $\mathcal G$. 
We recognize that, since (\ref{eqn:vecconst}) holds on-shell for any $l$, it specifies a right null vector of $B$.  
By the exactness of $B$ and $A$, it follows that each term in the sum above vanishes individually up to the trivial exception that there may exist nodes in a circuit (such as a node connecting two capacitors in series) whose flux does not appear in $\mathscr L$ and may satisfy (\ref{eqn:vecconst}) at any value. See \cite{fluxcharge,osborne2024stochastic} as well as Section \ref{sec:vars} for more discussion. 
The other constitutive relations follow analogously by rearranging $\frac{\delta S}{\delta \phi_v}$ in place of $\frac{\delta S}{\delta q_l}$, demonstrating the validity of (\ref{eqn:gyrterm}) and its agreement with \cite{pr2025}. 

On the other hand, writing $M = -B(\mathbb P_{\mathcal I} + \frac{1}{2} \mathbb P_{\mathcal G})A$ instead yields 
\begin{equation}
    0 =-\sum_{e \in \mathcal I} B_{le} A_{ev}\dot\phi_v -\sum_{e \in \mathcal C} B_{le} (E_e^C)' 
     -\frac{1}{2}\sum_{e \in \mathcal G}B_{le}\left[
        A_{ev}\dot\phi_v + G_{(e,e')}^{-1} \sum_{l'}B_{l'e'}\dot q_{l'}
        \right].
\end{equation}
The first term is transparently the voltage differences across inductors along the loop $l$. The second and third terms may also be interpreted as voltage differences across capacitors and gyrators respectively, now that we have established the imposition of their constitutive relations. Analogously, $\frac{\delta S}{\delta \phi_v}$ imposes conservation of current at a vertex. 

Of course, there are many correct Lagrangians to describe a given electrical circuit. Trivially, the addition of a total time derivative leaves dynamics unchanged. We are also free to introduce and integrate out Lagrange multipliers. For concreteness write $\mathscr L$ as in (\ref{eqn:no_omega_lagrangian}), and consider 
\begin{equation}
    \mathscr L' = \mathscr L + \sum_{e \in \mathcal C \cup \mathcal G} \lambda_e \left(\sum_l q_l B_{l e} - Q_e\right).
\end{equation}
Since $\mathscr L$ has no dependence upon $Q_e$, this change of variables does not affect dynamics and gives rise to only a point transformation. After integrating out $\lambda_e$, we find 
\begin{equation}
    \mathscr L' = \left(\sum_{e \in \mathcal C,v}+
    \frac{1}{2} \sum_{e \in \mathcal G}\right)
    Q_e A_{ev} \dot\phi_v 
    + \sum_{\fg = (e,e')}\Big(\frac{G_{\fg}}{2}\sum_{u,v}  \phi_u A_{eu}A_{e'v}\dot\phi_v  -  \frac{1}{2G_{\fg}}\, Q_e \dot Q_{e'} \Big)
    - \sum_{e \in \mathcal C} E_e^C(Q_e) - \sum_{e \in \mathcal I} E_e^I\left(\sum_v A_{ev}\phi_v\right).
\end{equation}
Since dynamical variables live on edges and vertices in these coordinates, we say that these are edge-vertex coordinates. It is also possible to define face-edge coordinates via the Lagrange multiplier 
\begin{equation}
    \mathscr L' = \mathscr L + \sum_{e \in \mathcal I \cup \mathcal G} \lambda_e \left(\Phi_e- \sum_v A_{ev}\phi_v\right).
\end{equation}

Since the only Lagrangian terms coupling distinct really-connected components are gyrator terms, we are free to choose coordinates on each really-connected component independently. For the remainder of this paper, we will not belabor this Lagrange multiplier manipulation and only remark on it when technically necessary. We emphasize that one may just as easily describe circuit Lagrangians in terms charges on vertices and fluxes on loops which preserve the dynamical description of the circuit despite changing the form of the EOM.

\subsection{Constraints and Quantization}\label{sec:vars}

The problems of constraint resolution and circuit quantization are equivalent provided that the reader is accepting of canonical quantization. 
In the present setting, the question of quantizability has been resolved by \cite{pr2025}, but we nonetheless include some discussion of constraints here. 
In order to manipulate Lagrange's equation into a form compatible with Hamiltonian dynamics, $\Omega$ must be invertible, but it is not. 
Hence, there must arise constraints revealing relationships between overcomplete degrees of freedom, which must in turn be solved in order to reduce the size of $\Omega$ without decreasing its rank. 

To illustrate this point, notice that for any linear combination of functional derivatives 
\begin{equation}
  \frac{\delta}{\delta X} := \sum_v c_v \frac{\delta }{\delta \phi_v} + \sum_l d_l \frac{\delta }{\delta q_l},
\end{equation} 
the principle of least action enforces that, on-shell,
\begin{equation}\label{eqn:plac}
  \frac{\delta }{\delta X} \int \mathrm d t \left(\langle q | M | \dot\phi \rangle + \sum_{\fg \in \fG} \mathscr L_{\fg}\right) = \frac{\delta }{\delta X} \int \mathrm d t \, E = \sum_v c_v \frac{\partial E}{\partial \phi_v} + \sum_l d_l \frac{\partial E}{\partial q_l}.
\end{equation} 
If the left hand side of (\ref{eqn:plac}) vanishes identically, the resulting equation of motion  is called a \emph{null vector constraint} and is purely algebraic in the sense that it does not depend on any time differentiated variable. A central theme of \cite{osborne2023symplectic,fluxcharge} is that there is one such algebraic constraint for every null vector of $\Omega$. 
Effectively, null vectors of $\Omega$ encode Lagrange multipliers inherited from Kirchhoff's laws and constitutive relations alone, which we can use to reduce the total number of variables appearing in $\mathscr L$. 
Although it is always possible to find the null vectors of a matrix numerically, for example, the solubility of the resulting algebraic constraint is not immediate. In particular the assumption of well-posedness is equivalent to the assumption that algebraic constraints admit unique solutions. Solubility aside, null vector constraints always encode some circuit simplification that can be carried out (at least in principle) on the diagram itself and hence we nonetheless discuss the details of what kinds of null vector constraints can arise. 

From the block structure of $\Omega$ in (\ref{eqn:omega}), it is clear that the kernel of $\Omega$ decomposes into two structurally distinct sources. Any $\ket{n} \in \ker(A\oplus B^\intercal )$ satisfies $\Omega\ket{n} = 0$ trivially: these are gauge directions inherited from the boundary maps themselves, namely the right null vectors of $A^{(i)}$ and $(B^{(i)})^\intercal$ for each really-connected component.
These null vectors correspond to the freedom to shift a global flux on each really-connected component's vertex set and a global charge on its loop set. By Equation \ref{eqn:gaugekernels}, each really-connected component contributes one null vector of each kind. The remaining null vectors have genuine dynamical content and are best dissected by examining $\Omega$ in the form of Equation \ref{eqn:upsilon}, where their structure is governed entirely by the central matrix $\Theta$. First we will consider arbitrary partitions of loops and vertices as $L \subseteq \mathcal{F}$ and $U \subseteq \mathcal{V}$ respectively. Additionally, we write:
\begin{subequations}
    \begin{align}
          \ket L &= \sum_{l \in L \subset \mathcal F} \alpha_l B^\intercal \ket{l}\\
          \ket U &= \sum_{v \in U \in V}\alpha_v A\ket v
    \end{align}
\end{subequations}
for $\alpha_i \in \mathbb{R}$. Every $\ket x = (\ket\phi, \ket q)^\intercal$ determines a pair of edge vectors $\ket\lambda = (A\ket\phi,\, B^\intercal\ket q)^\intercal = (\ket U, \ket L)^\intercal$, and by (\ref{eqn:gaugekernels}) the map $\ket x \mapsto \ket\lambda$ has a$2k$-dimensional kernel, namely the gauge directions. From (\ref{eqn:upsilon}), $\sigma\Omega\ket x = (B \oplus A^\intercal)\, \Theta\ket\lambda$, so $\ket x$ lies in $\ker\sigma\Omega = \ker\Omega$ precisely when $\Theta \ket\lambda \in \ker(B \oplus A^\intercal)$. By exactness that kernel is $\mathrm{im}(A \oplus B^\intercal)$, so the condition is that $\Theta\ket\lambda$ is again a pair of the same form:
\begin{equation}\label{eqn:kerchar}
    \Theta \begin{pmatrix} \ket U \\ \ket L \end{pmatrix}
    = \begin{pmatrix} A\ket{\phi'} \\ B^\intercal \ket{q'} \end{pmatrix}
    \quad\text{for some } \ket{\phi'},\, \ket{q'} .
\end{equation}
The dynamical null vectors are therefore exactly those $\ket\lambda = (\ket U, \ket L)^\intercal$ that $\Theta$ returns to that form, and $\dim\ker\Omega$ exceeds the number of such independent $\ket\lambda$ by the $2k$ gauge directions. Because $\Theta^2 = 1$, $\Theta$ maps this collection of $\ket\lambda$ bijectively onto itself, and an involution on a space splits it into its $+1$ and $-1$ eigenspaces; the collection therefore has a basis of vectors satisfying
\begin{equation}\label{eqn:Thetaeigen}
    \Theta\ket\lambda = \pm\ket\lambda .
\end{equation}
We classify the dynamical null vectors through such a basis. Note that a general null vector is a sum of a $+1$ and a $-1$ eigenvector and need not itself be an eigenvector, so (\ref{eqn:Thetaeigen}) characterizes a spanning set rather than every null vector.

Writing $\Theta$ as the block matrix from \eqref{eqn:Thetablock} and expanding \eqref{eqn:Thetaeigen}  gives the pair of conditions
\begin{subequations}\label{eqn:Thetaeigen_expanded}
    \begin{align}
    (\mathbb P_{\mathcal C} - \mathbb P_{\mathcal I})\ket{U} - \sum_{\mathfrak g \in \fG} G_{\mathfrak g}^{-1}\,\Gamma_{\mathfrak g}\,\ket{L} &= \pm  \ket{U}\\
        \sum_{\mathfrak g \in \fG} G_{\mathfrak g}\,\Gamma_{\mathfrak g}\,\ket{U} + (\mathbb P_{\mathcal I} - \mathbb P_{\mathcal C})\ket{L} &= \pm \ket{L}.
    \end{align}
\end{subequations}

Recall that $\ket{\lambda} \in \mathbb{R}^{2|\mathcal{E}|}$ and exist on a space of vectors representing edges (circuit elements). Our first structurally distinct type of $\Theta$ eigenvector then corresponds to cuts or loops composed of only a single element class which we call the \textit{pure null vectors}. The pure null vectors are supported on exactly one of the coordinates block such that they take the form $\ket\lambda = (\ket U,0)^\intercal$ or $\ket\lambda = (0, \ket L)^\intercal$. These pure null vectors correspond, in agreement with the formalism of \cite{fluxcharge}, to cuts and loops composed of only a single element class. They are identifiable directly from the circuit's topology without algebraic computation. We will examine specifically the case for a null vector from a pure-cut as the pure-loop null vectors are easily found by the same argument. Should $\ket{l} =0$ (thus, $\ket{L}= 0$ too), then the conditions become:
\begin{subequations}\label{eqn:pureeig}
    \begin{align}
      (\mathbb P_{\mathcal C} - \mathbb P_{\mathcal I})\ket{U} &= \pm  \ket{U}\\
        \sum_{\mathfrak g \in \fG} G_{\mathfrak g}\,\Gamma_{\mathfrak g}\,\ket{U}  &= 0.
    \end{align}
\end{subequations}
We specify that $\ket{L}= 0$ to emphasize that we are considering a vector that is not in $\ker B^\intercal$, but rather a null vector of $\Omega$ that corresponds to a functional derivative without any charge coordinates.

The two equations are satisfied whenever $\ket U$ has support on a single type of reciprocal element (purely capacitive cut or purely inductive cut). Similarly, should a loop null vector $\ket L$ be made of loops of only capacitors or only inductors, we find a null vector and therefore a constraint.

The remaining null vectors are \emph{gyrator-mixing}: linear combinations of basis vectors from both coordinate blocks, so $\ket \lambda  = ( \ket U,\ket L)^\intercal$ with both $\ket U$ and $\ket L$ nonzero. Projecting Equation \ref{eqn:Thetaeigen_expanded} onto the space of gyrative edges with $\mathbb{P}_{\mathcal{G}}$ isolates the genuinely block-mixing content of the constraint:
\begin{subequations}\label{eqn:gyrmix}
       \begin{align}
    - \sum_{\mathfrak g \in \fG} G_{\mathfrak g}^{-1}\,\Gamma_{\mathfrak g}\,\ket{L} &= \pm \mathbb{P}_{\mathcal{G}} \ket{U}\\
        \sum_{\mathfrak g \in \fG} G_{\mathfrak g}\,\Gamma_{\mathfrak g}\,\ket{U} &= \pm \mathbb{P}_{\mathcal{G}}\ket{L}.
    \end{align}
\end{subequations}
each of which couples $\ket U$ and $\ket L$ through gyrator edges via $\Gamma_{\mathfrak g}$. These gyrator-mixing constraints, unlike the pure constraints, are not readily identifiable from loops and cuts on some partition of edges. Instead, recognizing these constraints requires computation. However, finding all the null vectors of $\Omega$ may always be done by polynomial-time matrix inversion.
Here we note that this reduction may always be carried out. The variables removed by the null-vector constraints are precisely the directions along which $\Omega$ vanishes, so the symplectic term $\tfrac14\langle x|\Omega|\dot x\rangle$ does not contain them: they enter $\LL$ only through the energy, undifferentiated. Their Euler-Lagrange equations are therefore algebraic, containing no time derivatives. A well-posed circuit is by definition one in which those equations have a unique solution. Substituting it back changes only the energy and leaves the symplectic term intact, regardless of nonlinearity in the element energies. The antisymmetry of $\Omega$ then guarantees that what remains is a well-defined phase space. The eigenvalues of any real antisymmetric matrix are purely imaginary, and the nonzero eigenvalues occur in pairs $\pm\mathrm i\lambda$ with $\lambda > 0$, so $\mathrm{rank}(\Omega) = 2\kappa$ for a positive integer $\kappa$. The case $\kappa = 0$ requires $\Omega$ to be the zero matrix, corresponding to a singular circuit with no dynamics. Therefore $\tilde\Omega$, the restriction of $\Omega$ to $\mathrm{im}(\Omega)$, is a $2\kappa \times 2\kappa$ antisymmetric full-rank matrix corresponding to a non-degenerate symplectic form on the phase space $\mathbb R^{|\mathcal V| + |\mathcal F|} / \ker(\Omega)$. Our formalism therefore always admits a Hamiltonian, and the symplectic reduction can be performed for any nonsingular, well-posed LCG circuit.

It would be desirable to produce a finite list of patterns that could be recognized from a circuit drawing and each herald some null vector constraint while perhaps providing instructions on exactly which functional derivative must be taken to produce the relevant equality. In the case of reciprocal circuits, as mentioned above, constraints come only from pure null vectors which may be associated with loops or cuts of only a single kind of circuit element. In the case of nonreciprocal circuits, however, the list of patterns that must be recognized fails to be finite (a gyrator-mixing null vector may involve an arbitrarily long chain of gyrators, each contributing a coupled pair of conditions in Equation \ref{eqn:gyrmix}). As such, computation is unavoidable without further assumptions. 
 
Another class of constraints arises when the right side of (\ref{eqn:plac}) vanishes identically rather than the left. The subsequent equation of motion depends only upon singly differentiated variables and is manifestly linear. Such constraints immediately imply that the time derivative of a linear combination of variables vanishes, or in other words a conservation law. As such, we refer to these constraints as \emph{Noether constraints}. Unlike null vector constraints, Noether constraints do not impede quantization, but nonetheless encode symmetries.

Now, we turn our attention toward Noether currents which arise due to the absence of some linear combination of degrees of freedom in the energy. Now, in the particular case of (\ref{eqn:no_omega_lagrangian}), we recognize that 
\begin{equation}
    \sum_{l \in L \subset\mathcal F} \alpha_l \frac{\partial E}{\partial q_l} = 0
\end{equation}
identically if and only if $\ket{L}$ with corresponding coefficients $\alpha_l \in \mathbb{R}$ satisfies 
\begin{equation}
    \mathbb P_{\mathcal C} |L \rangle = 0.
\end{equation}
If it is also the case that 
\begin{equation}\label{eqn:additional}
    \mathbb P_{\mathcal G} |L\rangle = 0, 
\end{equation}
then 
\begin{equation}
      \sum_{l \in L} \alpha_l \frac{\delta}{\delta q_l}\int\mathrm dt\,\Big(\langle q|M|\dot\phi\rangle + \sum_{\fg\in\fG}\mathscr{L}_\fg\Big) = 0
\end{equation}
vanishing off shell identically. 
If (\ref{eqn:additional}) fails to hold, then 
\begin{equation}\label{eqn:noetherloop}
    \sum_{l \in L \subset \mathcal F} \alpha_l\frac{\delta}{\delta q_l} \int\mathrm d t \,\mathscr L = \frac{1}{2} \left(
       \langle \dot \phi| A^\intercal \mathbb P_{\mathcal G} |L\rangle  - \sum_{\mathfrak g \in \fG} G_{\mathfrak g}^{-1}\,\langle L | \Gamma_{\mathfrak g}\,B^\intercal | \dot q\rangle
    \right)
\end{equation} 
Hence, for each loop set in a circuit with no capacitors ($\mathbb{P}_\mathcal{C} \ket{L} = 0$), there exists a Noether-charge constraint. Unlike the null-vector case, gyrator content plays a distinguishing role here: a Noether candidate becomes non-trivial precisely when it carries gyrator support. Specifically, if the loop in question is also free of gyrative edges ($\mathbb{P}_\mathcal G\ket{L}=0$), then the constraint is trivial as both sides of Eq. (\ref{eqn:plac}) vanish. A capacitor-free loop with $\mathbb{P}_\mathcal{G}\ket{L} \neq 0$ has a non-vanishing left-hand side and produces a non-trivial Noether current of the form (\ref{eqn:noetherloop}), which is recognizable as a sum of constitutive relations for gyrative edges. Explicitly, these non-trivial Noether constraints are generated by the $\ket{L}$ in the image of $B^\intercal$ such that:
\begin{subequations}
\begin{align}
        \mathbb{P}_{\mathcal C} \ket{L} &= 0,\\
        \mathbb{P}_{\mathcal G} \ket{L} &\neq 0
\end{align}
\end{subequations}
The conserved Noether current corresponds to $q_l \rightarrow q_l +c \alpha_l$ for every $l \in L$ which leaves the Lagrangian invariant. A dual argument provides Noether constraints on node fluxes. Considering instead every $U \subset \mathcal{V}$ such that $\ket U$ satisfies $\PP_{\mathcal I}\ket U = 0$ and $\PP_{\mathcal G}\ket U \neq 0$, the symmetry $\phi_v \to \phi_v + c \alpha_v$ ($v \in U$) leaves $\mathscr L$ invariant and removes a coordinate.

Up to this point, we have not at all considered the underlying topology of configuration space. 
In essence, the Lagrangian formalism we have introduced acts as a mnemonic for the integrating out of redundancies in configuration space, ultimately providing a space of reduced dimension that can be outfitted with a symplectic form. 

After isolating and solving all null vector constraints, a circuit Lagrangian may always be written as 
\begin{equation}\label{eqn:darboux}
    \mathscr L = \sum_{i} p_i \dot x_i - H(x,p)
\end{equation}
where $x_i$ and $p_i$ are pairwise independent linear combinations of $q$ and $\phi$ variables not necessarily indexed by any graph theoretic object.
Once the problem of identifying and solving constraints is resolved, the problem of quantizing nonreciprocal circuits is exactly the same as the problem of quantizing reciprocal circuits. 
The procedure of writing the Lagrangian in this way is equivalent to producing Darboux coordinates. 
The Legendre transform then identifies $x_i$ as the conjugate partner of $p_i$, with Hamiltonian function given by $H$.\footnote{Which is different from the total energy function only in that various constraints have been solved}
The physical properties of the system at hand do not follow from this information alone, however. 
In particular, since quantization is the goal, one must consider the fact that changes to the classical phase space can have measurable effects on the quantum Hamiltonian. 
It is tempting to simply write the commutation relations as 
\begin{equation}\label{eqn:canonical}
    [x_i, p_j] = \mathrm i \hbar \delta_{ij}, 
\end{equation}
but this can produce incorrect results. 
The simplest example illustrating this concept is the transmon qubit with 
\begin{equation}\label{eqn:tmon}
    H = \frac{1}{2 C}(q + q_g)^2 - E_J \cos(\phi)
\end{equation}
where $q_g$ is a constant and $q$ is the conjugate partner of $\phi$. 
If the given commutation relations are $[\phi, q] = \mathrm i \hbar$, then the spectrum of (\ref{eqn:tmon}) is gapless and independent of $q_g$; this choice implies that $\phi$ lives on $\mathbb R$ and hence Bloch's theorem applies.
On the other hand, if we demand that $\phi \sim \phi + c$ for some constant $c$, then we must have that $\cos(\phi + c) = \cos(\phi)$, so it follow that $c = 2 \pi n$. Because the operator $\phi$ is multiply valued on $S^1$, the commutation relations must instead be written as
\begin{equation}
    [e^{\mathrm i \phi / n}, q] = -\frac{\hbar}{n} e^{\mathrm i \phi/n}.
\end{equation}
Even the value of $n$ is physically important: for $n = 1$, the energy spacing between the first excited and ground states is $\sim \sqrt{E_J/C}$ while for $n = 2$ it is instead $\sim e^{-8\sqrt{E_J C}}$.
There is no mathematical reason by which to decide that some variable lives in $S^1$ as opposed to $\mathbb R$. 
Indeed, even in the case of the transmon, we distinguish between variables with a periodic identification and those without by experiment. 
In the matter of quantization, these statements play the same role with nonreciprocal circuits as with reciprocal ones. 
Explicitly, one must first write $\mathscr L$ in the form (\ref{eqn:darboux}), thus defining the pairs $x_i$ and $p_i$. 
Then, the quantum Hamiltonian function is given by $H(x,p)$ subject to commutation relations consistent with (\ref{eqn:canonical}) but written in terms of only singly valued functions on the chosen classical phase space. 

\subsection{Gyrator Replacement Rules}\label{sec:combo}

\begin{figure}
    \centering
    \begin{subfigure}[t]{0.36\linewidth}
        \centering
        \begin{circuitikz}[scale=2]
          \draw[thick] (0,-.5) to[twoport,t=$S_1$] (0,1.5)  to (1,1.5) to [half gyrator,i_=$e_1$,color=blue] (1,.6)to[half gyrator,i_=$e_2$,color=red] (1,-.4) to (1,-.5) to (0,-.5);
          \filldraw[black] (0.5,-.5) circle (1pt) node[anchor=north]{$v_3$};
          \filldraw[black] (0.5,1.5) circle (1pt) node[anchor=south]{$v_1$};
          \filldraw[black] (1,.5) circle (1pt) node[anchor=west]{$v_2$};
          \draw[thin, <-] (0.5,.5) node{$f_1$}  ++(-60:0.2) arc (-60:170:0.2);
          \draw[thin,->] (-.1,-.3) to[short,l_=$f_2$] (-.1,-.6) to (.2,-.6);
          \draw[thick] (1.7,-.5) to (2.7,-.5) to[twoport,t=$S_2$] (2.7,1.5) to (2.7,1.5) to (1.7,1.5) to[half gyrator, mirror,i_=$e_3$,color=blue] (1.7,.6) to[half gyrator, mirror,i_=$e_4$,color=red] (1.7,-.4) to (1.7,-.5);
          \filldraw[black] (2.2,-.5) circle (1pt) node[anchor=north]{$v_6$};
          \filldraw[black] (2.2,1.5) circle (1pt) node[anchor=south]{$v_4$};
          \filldraw[black] (1.7,.5) circle (1pt) node[anchor=east]{$v_5$};
          \draw[thin, <-] (2.2,.5) node{$f_3$}  ++(-60:0.2) arc (-60:170:0.2);
          \draw[thin,->] (2.8,1.3) to (2.8,1.6) to[short,l_=$f_4$] (2.5,1.6);
        \end{circuitikz}
        \caption{sub-circuits connected by gyrators in series}
        \label{fig:gyrator_combination_series}
    \end{subfigure}\hfill
    \begin{subfigure}[t]{0.62\linewidth}
        \centering
        \begin{circuitikz}[scale=2]
          \draw[thick] (-0.2,-.5) to[twoport,t=$S_1$] (-0.2,1.5)  to (0.8,1.5) to [half gyrator,i_=$e_1$,color=red] (0.8,-.4) to (0.8,-.5) to (-0.2,-.5); 
          \draw[thick] (0.8,1.5) to (1.8,1.5) to [half gyrator,i_=$e_2$,color=blue] (1.8,-.4) to (1.8,-.5) to (0.8,-.5);          
          \filldraw[black] (0.8,-.5) circle (1pt) node[anchor=north]{$v_2$};
          \filldraw[black] (0.8,1.5) circle (1pt) node[anchor=south]{$v_1$};
          \draw[thin, <-] (0.3,.5) node{$f_1$}  ++(-60:0.2) arc (-60:170:0.2);
          \draw[thin, <-] (1.3,.5) node{$f_2$}  ++(-60:0.2) arc (-60:170:0.2);
          \draw[thin,->] (-.3,-.3) to[short,l_=$f_3$] (-.3,-.6) to (-.1,-.6);
          \draw[thick] (2.5, 1.5) to [half gyrator, mirror,i=$e_3$,color=blue] (2.5,-.4) to (2.5,-.5);
          \draw[thick] (3.5, 1.5) to [half gyrator, mirror,i=$e_4$,color=red] (3.5,-.4) to (3.5,-.5);
          \draw[thick] (2.5,1.5) to (3.5,1.5);
          \draw[thick] (2.5,-.5) to (3.5,-.5);
          \draw[thick] (3.5,1.5) to (4.5,1.5) to [twoport,t=$S_2$] (4.5,-.5) to (3.5,-.5);          
          \filldraw[black] (3.5,-.5) circle (1pt) node[anchor=north]{$v_4$};
          \filldraw[black] (3.5,1.5) circle (1pt) node[anchor=south]{$v_3$};
          \draw[thin, <-] (3,.5) node{$f_4$}  ++(-60:0.2) arc (-60:170:0.2);
          \draw[thin, <-] (4,.5) node{$f_5$}  ++(-60:0.2) arc (-60:170:0.2);
          \draw[thin,->] (2.4,-.3) to[short,l_=$f_6$] (2.4,-.6) to (2.7,-.6);
        \end{circuitikz}
        \caption{sub-circuits connected by gyrators in parallel}
        \label{fig:gyrator_combination_parallel}
    \end{subfigure}
    \caption{The labeled squares denote two-port devices with some finite number of internal elements. Unlike later sections, we allow sub-circuits $S_1$ and $S_2$ here to contain reciprocal elements and gyrative edges. In both cases presented, the two gyrators, with gyration ratios $G_1$ and $G_2$, respectively, in each circuit may be represented as a single gyrator with gyration ratio $G^*$. In (a) $G^* = (\frac{1}{G_1} + \frac{1}{G_2})^{-1}$, while in (b) $G^* = G_1 + G_2$.}\label{fig:gyrator_combination}
\end{figure}

There are notable configurations of two gyrators that may be replaced by a simpler element. We present three such rules in this section: the  \textit{series} and \textit{parallel deletion rules}, which each combine two  gyrators into a single gyrator with effective gyration ratio $G^*$, and the \textit{cascade-to-transformer rule}, which replaces two gyrators sharing an isolated RCC with a single ideal transformer. All three are point transformations on the Lagrangian, and therefore canonical transformations after the Legendre transform. All of these work by identifying a topologically isolated vertex, face, or RCC, whose dynamical variables appear only in the gyrator terms and can be eliminated by their EOM. 

First, we derive the series deletion rule. Consider the circuit in Figure \ref{fig:gyrator_combination_series} and assume it is nonsingular and well-posed. We draw attention in particular to gyrators $\mathfrak{g}_1 = (e_1, e_3)$ and $\mathfrak{g}_2 = (e_2, e_4) \in \mathfrak{G}$, with gyration ratios $G_1$ and $G_2$ respectively. In particular, vertex $v_2$ is incident only on $e_1$ and $e_2$ and $v_5$ on $e_3$ and $e_4$. Therefore, $\phi_{v_2}$ and $\phi_{v_5}$ appear only in the Lagrangian terms for the two gyrators:
\begin{equation}
\mathscr L_{\mathfrak g_1, \mathfrak g_2} 
  = \tfrac{G_1}{2}(\phi_{v_2} - \phi_{v_1})(\dot\phi_{v_5} - \dot\phi_{v_4}) 
   + \tfrac{G_2}{2}(\phi_{v_3} - \phi_{v_2})(\dot\phi_{v_6} - \dot\phi_{v_5}) - \tfrac{1}{2}\!\left(\tfrac{1}{G_1} + \tfrac{1}{G_2}\right)\!
        (q_{f_1} - q_{f_2})(\dot q_{f_4} - \dot q_{f_3}).
\end{equation}
We identify the two Noether constraints obtained from the functional derivatives with respect to $\phi_{v_2}$ and $\phi_{v_5}$, which appear only in the gyrator terms, to get:
\begin{subequations}
\begin{align}
    \frac{\delta S}{ \delta \phi_2} &= \frac{G_1}{2}(\dot\phi_{v_5} - \dot\phi_{v_4} ) -  \frac{G_2}{2}(\dot\phi_{v_6} - \dot\phi_{v_5} ) = 0   \\
    \frac{\delta S}{ \delta \phi_5} &= \frac{G_1}{2}(\dot\phi_{v_2} - \dot\phi_{v_1} ) -  \frac{G_2}{2}(\dot\phi_{v_3} - \dot\phi_{v_2} ) = 0 
\end{align}
\end{subequations}
Integrating both equations and gauge-fixing the integration constants to zero solves for both vertex fluxes:
\begin{subequations}
\begin{align}
      \phi_{v_2} &= \frac{G_1 \phi_{v_1} + G_2 \phi_{v_3}}{G_1 + G_2}\\ 
  \phi_{v_5} &= \frac{G_1 \phi_{v_4} + G_2 \phi_{v_6}}{G_1 + G_2}.
\end{align}
\end{subequations}
Substituting both back into gyrator terms collapses them into a single effective gyrator contribution:
\begin{equation}
\mathscr L_{\mathfrak g^*} 
  = \tfrac{G^*}{2}(\phi_{v_3} - \phi_{v_1})(\dot\phi_{v_6} - \dot\phi_{v_4})
  - \tfrac{1}{2 G^*}(q_{f_1} - q_{f_2})(\dot q_{f_4} - \dot q_{f_3})
\end{equation}
where $G^* = \left(\frac{1}{G_1} + \frac{1}{G_2}\right)^{\!-1}$. Therefore, a circuit with two gyrators where one half of each gyrator is in series with one half of the other may be expressed as a single gyrator.

Next, we derive the parallel deletion rule. Consider the circuit in Figure \ref{fig:gyrator_combination_parallel} and assume it is nonsingular and well-posed. We draw attention in particular to gyrators $\mathfrak{g}_1 = (e_1, e_4)$ and $\mathfrak{g}_2 = (e_2, e_3) \in \mathfrak{G}$, with gyration ratios $G_1$ and $G_2$ respectively. We write the Lagrangian contribution of the two gyrators:
\begin{equation}
\mathscr L_{\mathfrak g_1, \mathfrak g_2} 
  = \tfrac{1}{2}(G_1 + G_2)(\phi_{v_2} - \phi_{v_1})(\dot\phi_{v_4} - \dot\phi_{v_3})
   - \tfrac{1}{2G_1}(q_{f_1} - q_{f_2})(\dot q_{f_4} - \dot q_{f_5})
   - \tfrac{1}{2G_2}(q_{f_2} - q_{f_3})(\dot q_{f_6} - \dot q_{f_4}).
\end{equation}
We recognize the two Lagrange multipliers and take functional derivatives with respect to $q_{f_2}$ and $q_{f_4}$ to get:
\begin{subequations}
\begin{align}
    \frac{\delta S}{\delta q_{f_2}} &= \tfrac{1}{2 G_1}(\dot q_{f_4} - \dot q_{f_5}) - \tfrac{1}{2 G_2}(\dot q_{f_6} - \dot q_{f_4}) = 0 \\
    \frac{\delta S}{\delta q_{f_4}} &= \tfrac{1}{2 G_1}(\dot q_{f_1} - \dot q_{f_2}) - \tfrac{1}{2 G_2}(\dot q_{f_2} - \dot q_{f_3}) = 0
\end{align}
\end{subequations}
Integrating both equations and gauge-fixing the integration constants to zero solves for both charges:
\begin{subequations}
\begin{align}
    q_{f_2} &= \frac{G_2 q_{f_1} + G_1 q_{f_3}}{G_1 + G_2} \\
    q_{f_4} &= \frac{G_2 q_{f_5} + G_1 q_{f_6}}{G_1 + G_2}.
\end{align}
\end{subequations}
Again, substituting both back into the gyrator terms yields a single effective gyrator contribution:
\begin{equation}
\mathscr L_{\mathfrak g^*} 
  = \tfrac{G^*}{2}(\phi_{v_2} - \phi_{v_1})(\dot\phi_{v_4} - \dot\phi_{v_3})
   - \tfrac{1}{2 G^*}(q_{f_1} - q_{f_3})(\dot q_{f_6} - \dot q_{f_5})
\end{equation}
where $G^* = G_1 + G_2$.

One may then consider a circuit where two gyrators have one half of their edge pairs in series with each other and the other half in parallel and find a pathological circuit configuration. As stated in Section \ref{sec:assumptions}, we maintain that circuits whose elements prevent dynamics are singular and, as shown below, such a configuration expands the class of singular circuits. Looking at Figure \ref{fig:singulargyrator}, the gyrators' constitutive equations give
\begin{subequations}\label{eqn:badgyr}
    \begin{align}
    G_1(\dot{\phi}_1-\dot{\phi}_2)=\dot{q}_5 -\dot{q}_3  \\
    -G_1(\dot{\phi}_5-\dot{\phi}_4)=\dot{q}_1 -\dot{q}_2 \\
    G_2(\dot{\phi}_2-\dot{\phi}_3)=\dot{q}_3 -\dot{q}_4 \\
    -G_2(\dot{\phi}_5-\dot{\phi}_4)=\dot{q}_1 -\dot{q}_2 
\end{align}
\end{subequations}

The second and fourth equations in Equations \ref{eqn:badgyr} are satisfied when $G_1=G_2$ or $\dot{\phi}_5-\dot{\phi}_4=\dot{q}_1 -\dot{q}_2=0$. In the former case, this pair then behaves as a single gyrator between the series and parallel port with gyration ratio $G_1$. In the latter case, however, there is no voltage across the parallel pair and no current through the series pair, making one a short and the other an open. No dynamics may propagate across the gyrators. If for example $S_1$ and $S_2$ were both composed of only capacitors (or inductors) then the lack of coupling would mean the whole circuit has no dynamics. Naturally then, these circuits have an intrinsic obstruction to a Hamiltonian description that does not depend on the method of producing a Lagrangian. 

\begin{figure}
    \centering
    \begin{circuitikz}[scale=2]
        \draw[thick] (0,-.5) to[twoport,t=$S_1$] (0,1.5)  to (1,1.5) to [half gyrator,i_=$e_1$,color=blue] (1,.6)to[half gyrator,i_=$e_2$,color=red] (1,-.4) to (1,-.5) to (0,-.5);
        \filldraw[black] (0.5,-.5) circle (1pt) node[anchor=north]{$v_3$};
        \filldraw[black] (0.5,1.5) circle (1pt) node[anchor=south]{$v_1$};
        \filldraw[black] (1,.5) circle (1pt) node[anchor=west]{$v_2$};

        \draw[thin, <-] (0.5,.5) node{$f_1$}  ++(-60:0.2) arc (-60:170:0.2);
        \draw[thin,->] (-.1,-.3) to[short,l_=$f_2$] (-.1,-.6) to (.2,-.6);

        \draw[thick] (1.7, 1.5) to [half gyrator, mirror,i=$e_3$,color=blue] (1.7,-.4) to (1.7,-.5);
        \draw[thick] (2.7, 1.5) to [half gyrator, mirror,i=$e_4$,color=red] (2.7,-.4) to (2.7,-.5);
        \draw[thick] (1.7,1.5) to (2.7,1.5);
        \draw[thick] (1.7,-.5) to (2.7,-.5);

        \draw[thick] (2.7,1.5) to (3.7,1.5) to [twoport,t=$S_2$] (3.7,-.5) to (2.7,-.5);          
        \filldraw[black] (2.7,-.5) circle (1pt) node[anchor=north]{$v_5$};
        \filldraw[black] (2.7,1.5) circle (1pt) node[anchor=south]{$v_4$};
        \draw[thin, <-] (2.2,.5) node{$f_3$}  ++(-60:0.2) arc (-60:170:0.2);
        \draw[thin, <-] (3.2,.5) node{$f_4$}  ++(-60:0.2) arc (-60:170:0.2);
        \draw[thin,->] (1.6,-.3) to[short,l_=$f_5$] (1.6,-.6) to (1.9,-.6);
    \end{circuitikz}
    \caption{A singular circuit with gyrators. The constitutive equations in Equations \ref{eqn:badgyr} refer to the gyration ratio of the blue gyrator as $G_1$ and that of the red gyrator as $G_2$.}
    \label{fig:singulargyrator}
\end{figure}

Finally, we consider gyrators in cascade which reduce to a transformer, a reciprocal element \cite{tellegen}. For two gyrators with ratios $G_1$ and $G_2$ joined through an RCC containing no other circuit elements, the resulting transformer has turns ratio $n = G_1/G_2$. To demonstrate this, consider the left circuit in Figure \ref{fig:trans}. Assume it is a nonsingular, well-posed LCG circuit and that its two gyrators $\mathfrak{g}_1 = (e_1, e_1'), \mathfrak{g}_2 = (e_2, e_2') \in \mathfrak{G}$ with gyration ratios $G_{\mathfrak{g}_1} = G_1$ and $G_{\mathfrak{g}_2} = G_2$. Note the inner gyrative edges $e_1', e_2$ are in parallel (and in series as well), joined at the same two vertices. They are incident on no other edges of $C$. Then by an algebraic elimination of the constrained junction variables, we may replace $\mathfrak{g}_1$ and $\mathfrak{g}_2$ by a single ideal transformer relating $e_1$ and $e_2'$ with turns ratio $n = G_1/G_2$, and the inner gyrative edges $e_1', e_2$ are removed. The constitutive relations of the two gyrators are
\begin{subequations}
\begin{align}
    G_1 \dot\Phi_{e_1} &= \dot Q_{e_1'}\\
    G_2 \dot\Phi_{e_2} &= \dot Q_{e_2'},
\end{align}
\end{subequations}
together with the dual relations $G_1 \dot\Phi_{e_1'} = -\dot Q_{e_1}$ and $G_2 \dot\Phi_{e_2'} = -\dot Q_{e_2}$ that follow from the antisymmetry of the gyrator term in the Lagrangian. The series condition enforces $\dot Q_{e_1'} = -\dot Q_{e_2}$ as they share the same faces while $\dot\Phi_{e_1'} = \dot\Phi_{e_2}$ as the gyrative edges are incident on the same vertices. Substituting,
\begin{equation}
    G_1 \dot\Phi_{e_1} = \dot Q_{e_1'} = -\dot Q_{e_2} = G_2 \dot\Phi_{e_2'},
\end{equation}
so $\dot\Phi_{e_2'} = (G_1/G_2)\dot\Phi_{e_1}$. Similarly,
\begin{equation}
    \dot Q_{e_1} = -G_1 \dot\Phi_{e_1'} = -G_1 \dot\Phi_{e_2} = -(G_1/G_2)\dot Q_{e_2'},
\end{equation}
so $\dot Q_{e_1} = -(G_1/G_2)\dot Q_{e_2'}$. These are the constitutive relations of an ideal transformer relating $e_1$ and $e_2'$ with turns ratio $n = G_1/G_2$.

The intermediate variables $Q_{e_1'} = -Q_{e_2}$ and $\Phi_{e_1'} = \Phi_{e_2}$ have been eliminated by these substitutions and may be removed from the dynamical description. The Lagrangian of $C$ thereby reduces to the Lagrangian of a circuit $C'$ in which $\mathfrak{g}_1$ and $\mathfrak{g}_2$ are replaced by the transformer just described, $e_1', e_2$ are removed, and all other elements are unchanged.  As transformers appear in our LCG framework as composite gyrator-gyrator objects, we have a mechanism for converting two non-reciprocal elements into a single reciprocal one (a fact we exploit in Section \ref{sec:fundrecip}).

\begin{figure}[h]
    \centering
    \begin{circuitikz}[scale=2.5]
        \draw[thick] (0, 0) to (0.4, 0);
        \draw[thick] (0, 1) to (0.4, 1);
        \draw[thick] (0.4, 0) to[half gyrator,mirror, i<=$e_1$] (0.4, 1);
        \draw[thick] (1.0, 0) to[half gyrator,i<=$e_1'$] (1.0, 1);
        \draw[thick] (1.0, 0) to (1.4, 0);
        \draw[thick] (1.0, 1) to (1.4, 1);
        \draw[thick] (1.4,0) to [half gyrator, mirror,i<=$e_2$](1.4,1);
        \draw[thick] (2,1) to [half gyrator,mirror,i>=$e_2'$](2,0);
        \draw[thick] (2, 0) to (2.4, 0);
        \draw[thick] (2, 1) to (2.4, 1);

        \draw[{Triangle[length=3mm, width=4mm]}-{Triangle[length=3mm, width=4mm]}, line width=4pt]
            (2.7, 0.5) -- (3.6, 0.5);
 
        \begin{scope}[shift={(3.9, 0)}]
            \draw[thick] (0, 0) to (0.5, 0);
            \draw[thick] (0, 1) to (0.5, 1);
            \draw[thick] (0.5, 0) to[L, mirror] (0.5, 1);
            \draw[thick] (1.0, 0) to[L]      (1.0, 1);
            \draw[thick] (0.72, 0.05) -- (0.72, 0.95);
            \draw[thick] (0.78, 0.05) -- (0.78, 0.95);
            \draw[thick] (1.0, 0) to (1.5, 0);
            \draw[thick] (1.0, 1) to (1.5, 1);
        \end{scope}
    \end{circuitikz}
    \caption{Two gyrators in cascade through a junction containing no reciprocal elements (left) are dynamically equivalent to a single ideal transformer with turns ratio $G_1/G_2$ (right). 
    }
    \label{fig:trans}
\end{figure}

The construction of an ideal transformer in our formalism is a minimal example of the networks that may be realized with gyrators. Gyrators in combination with reciprocal network elements can synthesize arbitrary linear passive non-reciprocal networks \cite{carlin}. This result rigorously established that circulators and other common non-reciprocal elements can be realized with no other source of non-reciprocity than that which can be modeled by a gyrator \cite{carlin}. Specifically, such a network is realizable if and only if the Hermitian part of its impedance matrix is positive-definite or positive semi-definite \cite{carlin}. Combined with our Hamiltonian formalism, we may predict the behavior of networks of linear non-reciprocal elements with the addition of non-linear conservative elements.

\section{Duality and  Circuit Equivalence Transformations}\label{sec:transformations} 
\subsection{Circuit Duality}
    Expanding on the construction of \cite{fluxcharge}, we define a notion of duality for our LCG circuits: an involutive mapping on circuits that acts unitarily on the Hamiltonian. Consider the following transformation which will henceforth be known as the LCG duality transform:
    \begin{equation}
        \begin{aligned}
            \phi_v &\rightarrow \phi_v^* = q'_l \\
            q_l &\rightarrow q_l^* = \phi'_v \\
            A &\rightarrow A^* = B^\intercal \\
            B &\rightarrow B^* = A^\intercal \\
            \mathcal{C} &\rightarrow \mathcal{C^*} \simeq \mathcal{I} \\
            \mathcal{I} &\rightarrow \mathcal{I^*} \simeq \mathcal{C} \\
            \mathcal{G} &\rightarrow \mathcal{G^*} \simeq \mathcal{G} \\
            \fG &\rightarrow \fG^* = \fG\\ G_\mathfrak{g} &\rightarrow G_{\mathfrak g}^* = -G_{\mathfrak g}^{-1} \\
            \mathcal{V} &\rightarrow \mathcal{V^*} \simeq \mathcal{F} \\
            \mathcal{F} &\rightarrow \mathcal{F^*} \simeq \mathcal{V}
        \end{aligned}
    \end{equation}

The $\Gamma_{\mathfrak g}$ depends only on the edge pair $\mathfrak{g}$, so it is preserved under duality, so the gyrator's contribution to the Lagrangian transforms entirely through the gyration-ratio rule $G_{\mathfrak g} \to -G_{\mathfrak g}^{-1}$. The substitutions $\phi_v \to q^*_l$ and $q_l \to \phi^*_v$ have significance to the units in SI: node fluxes carry units of magnetic flux while loop charges carry units of electric charge, so the naive substitution would render energy terms dimensionally inconsistent. A capacitive energy $Q_e^2/(2C)$ mapped term-by-term to $(\Phi^*_e)^2/(2C)$ no longer has units of energy. 
The unit convention we establish in Section \ref{sec:assumptions} which demands a reference conductance $G_0 = 1$ ensures physically sensible dimensions. In this convention flux and charge have the same units, as do their linear coupling coefficients inductance and capacitance.
Hence, the per-element rule for reciprocal components is then simply $C^*_e = L_e$ and $L^*_e = C_e$, preserving the numerical value of each scalar parameter across the dual map. This numerical preservation is a property of the global duality transform in the unit convention $G_0 = 1$. The partial dual of Section \ref{sec:rules}, which exchanges flux and charge across one gyrator of ratio $G_\fg$, rescales the element parameters by $G_\fg^{\pm 2}$, Eq. \ref{eqn:elementrules}. The gyration ratios transform as $G_{\mathfrak g} \to -G_{\mathfrak g}^{-1}$, ensuring consistency of units. Exchanging flux and charge turns $\tfrac{G_\mathfrak{g}}{2}\Phi_{e_1}\dot\Phi_{e_2} - \tfrac{1}{2G_\mathfrak{g}}Q_{e_1}\dot Q_{e_2}$ into $\tfrac{G_\mathfrak{g}}{2}Q_{e_1}\dot Q_{e_2} - \tfrac{1}{2G_\mathfrak{g}}\Phi_{e_1}\dot\Phi_{e_2}$, which is again a gyrator term, now with ratio $-G_{\mathfrak g}^{-1}$. The inversion follows from the fact that a gyration ratio has units of charge per flux while the sign change reflects the antisymmetry of the gyrator constitutive relations. The same per-element rule applies regardless of how many gyrators the circuit contains, with each gyration ratio transforming independently. We now consider the transformation on each term of the Lagrangian. From $A \to B^\intercal$, $B \to A^\intercal$, $\mathcal C \leftrightarrow\mathcal I$, and $\mathcal G \to \mathcal G$, the dual connection matrix is:
\begin{equation}\label{eqn:Mstar}
M = \tfrac{1}{2} B(\mathbb P_{\mathcal C} - \mathbb P_{\mathcal I}) A
\;\longrightarrow\;
M^* = \tfrac{1}{2}A^\intercal(\mathbb P_{\mathcal I} - \mathbb P_{\mathcal C}) B^\intercal
= -\tfrac{1}{2} A^\intercal(\mathbb P_{\mathcal C} - \mathbb P_{\mathcal I})B^\intercal
= -M^\intercal,
\end{equation}
The following matrices are swapped under duality:
\begin{subequations}\label{eqn:Gammastar}
\begin{align}
    A^\intercal \Gamma_{\mathfrak g} A &\longrightarrow\; B\Gamma_{\mathfrak g} B^\intercal\\
    B\Gamma_{\mathfrak g} B^\intercal &\longrightarrow\; A^\intercal\Gamma_{\mathfrak g} A
\end{align}
\end{subequations}
Considering the above and the transformation of the gyration ratios $G_{\mathfrak g} \to -G_{\mathfrak g}^{-1}$, the per-gyrator contribution is similarly invariant:
\begin{equation}\label{eqn:Lgstar}
\begin{aligned} 
    \mathscr L_{\mathfrak g} &= \frac{G_{\mathfrak g}}{4}\langle\phi|A^\intercal \Gamma_{\mathfrak g} A|\dot\phi\rangle \;-\;\frac{1}{4 G_{\mathfrak g}}\langle q|B \Gamma_{\mathfrak g} B^\intercal|\dot q\rangle\\
    \longrightarrow \mathscr L_{\mathfrak g}^* &= \frac{G_{\mathfrak g}^{*}}{4}\langle \phi^*|(A^*)^\intercal \Gamma_{\mathfrak g} A^*|\dot \phi^*\rangle \;-\;  \frac{1}{4G_{\mathfrak g}^{*}}\langle q^*|B^* \Gamma_{\mathfrak g} (B^*)^\intercal|\dot q^*\rangle \\
    &= \frac{ (-G_{\mathfrak g}^{-1})}{4}\langle q |B \Gamma_{\mathfrak g} B^\intercal|\dot q \rangle \;-\;\frac{1}{4(-G_{\mathfrak g}^{-1})}\langle \phi|A^\intercal \Gamma_{\mathfrak g} A|\dot \phi\rangle \\
    &= -\frac{ 1}{4G_{\mathfrak g}}\langle q |B \Gamma_{\mathfrak g} B^\intercal|\dot q \rangle \;+\; \frac{G_{\mathfrak g}}{4}\langle\phi|A^\intercal \Gamma_{\mathfrak g} A|\dot\phi\rangle\\
    &=  \mathscr L_{\mathfrak g}\\
\end{aligned}
\end{equation}
Finally we consider the energies of the reciprocal elements. These depend only on the edge charge or edge flux and as such a capacitive element $E^C_e(Q_e)$ with $Q_e = \bra{q}B\ket{e}$ undergoes the following transformation after exchanging its element type and $Q_e \rightarrow \Phi_e^*$:
\begin{equation}\label{eqn:ECstar}
E_e^C \;\longrightarrow\; (E_e^C)^* = E_e^I\big|_{e \in \mathcal I^*}
\end{equation}
Similarly, an inductive element has its energy term transformed as:
\begin{equation}\label{eqn:EIstar}
E_e^I \;\longrightarrow\; (E_e^I)^* = E_e^C\big|_{e \in \mathcal C^*}
\end{equation}
Substituting \eqref{eqn:Mstar}, \eqref{eqn:Lgstar}, \eqref{eqn:ECstar}, and \eqref{eqn:EIstar}
into $\mathscr L$, the Lagrangian then transforms as:
\begin{equation}\label{eqn:Lstar}
\mathscr L
= \langle q|M|\dot\phi\rangle
   + \sum_{\mathfrak g \in \fG} \mathscr L_{\mathfrak g}
   - \sum_{e \in \mathcal C} E_e^C
   - \sum_{e \in \mathcal I} E_e^I
\longrightarrow\
\mathscr L^*
= -\langle\phi|M^\intercal|\dot q\rangle
   + \sum_{\mathfrak g \in \fG} \mathscr L_{\mathfrak g}
   - \sum_{e \in \mathcal I} E_e^I
   - \sum_{e \in \mathcal C} E_e^C.
\end{equation}

After integrating the first term of $\LL^*$ by parts (boundary contributions vanish at the temporal endpoints), one may note $-\bra{\phi}M^\intercal\ket{\dot q} = \bra{q}M\ket{\dot\phi} - \tfrac{\mathrm d}{\mathrm dt}\bra{q}M\ket{\phi}$, so $\LL^* = \LL$ up to a total time derivative. Therefore the action, the equations of motion and the Hamiltonian spectrum are invariant under the LCG duality transform. As in previous work, the duality mapping relies on the really-connected components of the underlying circuit graph being planar. Given that the number of faces is exactly one more (counting the outer face) than the number of linearly independent loops in a really-connected component, $A^{(i)*} \to B^{(i)\intercal}$ uniquely provides valid boundary maps that satisfy Kirchhoff's laws. As this transformation preserves the kernel of the pre-symplectic matrix, it preserves the Hamiltonian. 

\subsubsection{Component-Wise Duality Transformation}\label{sec:cwd}
For a circuit with many really-connected components, we may apply the duality transformation componentwise, that is, only applying the dual operation to a particular really-connected component while preserving the spectrum. As the boundary maps decompose as $A = \bigoplus_i A^{(i)}$ and $B = \bigoplus_i B^{(i)}$ (Equation \ref{eqn:bmapdecomp}), the connection matrix $M$ is block-diagonal over really-connected components, and the substitution above may be applied to any chosen subset of the components while the remainder are left untouched. We emphasize that this transformation is a relabeling of coordinates that leaves the Lagrangian and EL equations identical. Terms internal to the untransformed components remain the same while gyrators inside the transformed components entirely have their ratio changed to $-G_\fg^{-1}$. However, any interaction between transformed and untransformed really-connected components is mediated by the transformation of a gyrator between a transformed really-connected component and an untransformed really-connected component into an ideal transformer. To clarify the previous sentence we provide an example in Figure \ref{fig:cwd_example}.

\begin{figure}[h]
    \centering
    \begin{subfigure}{\linewidth}
        \centering
        \begin{circuitikz}[scale=2.0]
            \path (0,-0.65) (0,1.95);
            \draw[thick] (0,1.5) to[half gyrator, color=red, i_=$e_4$] (0,0);
            \draw[thick] (0,1.5) to[capacitor,i=$e_1$] (1.5,1.5) to[half gyrator,color=ForestGreen, i_=$e_2$] (1.5,0)
            to [inductor,i_=$e_3$](0,0);
            \filldraw[black] (0,1.5) circle (1pt) node[anchor=south]{$v_{1}$};
            \filldraw[black] (1.5,1.5) circle (1pt) node[anchor=south]{$v_{2}$};
            \filldraw[black] (1.5,0) circle (1pt) node[anchor=north]{$v_{3}$};
            \filldraw[black] (0,0) circle (1pt) node[anchor=north]{$v_{4}$};
            \draw[thin, <-] (.75,.75) node{$f_1$} ++(-60:0.2) arc (-60:170:0.2);
            \draw[thin,->] (1.4,-0.2) to[short,l_=$f_2$] (1.8,-0.2) to (1.8,0.3);
            \begin{scope}[shift={(2.7,0)}]
                \draw[thick] (0,1.5) to[half gyrator,color=ForestGreen,mirror, i_=$e_8$] (0,0);
                \draw[thick] (0,1.5) to[capacitor,i=$e_5$] (1.5,1.5);
                \draw[thick] (1.5,1.5) to [half gyrator, mirror, color=red,i=$e_6$] (1.5,0);
                \draw[thick] (1.5,0) to [inductor,i=$e_7$](0,0);
                \filldraw[black] (0,1.5) circle (1pt) node[anchor=south]{$v_{5}$};
                \filldraw[black] (1.5,1.5) circle (1pt) node[anchor=south]{$v_{6}$};
                \filldraw[black] (1.5,0) circle (1pt) node[anchor=north]{$v_{7}$};
                \filldraw[black] (0,0) circle (1pt) node[anchor=north]{$v_{8}$};
                \draw[thin, <-] (.75,.75) node{$f_3$} ++(-60:0.2) arc (-60:170:0.2);
                \draw[thin,->] (1.4,-0.2) to[short,l_=$f_4$] (1.8,-0.2) to (1.8,0.3);
            \end{scope}
        \end{circuitikz}\caption{  A circuit with two gyrators.}\label{fig:cwd_a}
        
    \end{subfigure}

    \begin{subfigure}{\linewidth}
        \centering
        \begin{circuitikz}[scale=2.0]
            \path (0,-1.45) (0,2.95);
            \draw[thick] (0,1.5) to[half gyrator, color=red, i_=$e_4$] (0,0);
            \draw[thick] (0,1.5) to[capacitor,i=$e_1$,l=$C$] (1.5,1.5) to[half gyrator,color=ForestGreen, i_=$e_2$] (1.5,0)
            to [inductor,i_=$e_3$,l_=$L$](0,0);
            \filldraw[black] (0,1.5) circle (1pt) node[anchor=south]{$v_{1}$};
            \filldraw[black] (1.5,1.5) circle (1pt) node[anchor=south]{$v_{2}$};
            \filldraw[black] (1.5,0) circle (1pt) node[anchor=north]{$v_{3}$};
            \filldraw[black] (0,0) circle (1pt) node[anchor=north]{$v_{4}$};
            \draw[thin, <-] (.75,.75) node{$f_1$} ++(-60:0.2) arc (-60:170:0.2);
            \draw[thin,->] (1.4,-0.2) to[short,l_=$f_2$] (1.8,-0.2) to (1.8,0.3);

            \draw[thick] (2.3,2.35) to (3.1,2.35);
            \draw[thick] (2.3,1.35) to (3.1,1.35);
            \draw[thick] (2.3,2.35) to [half gyrator, mirror, color=ForestGreen, i=$x_1$] (2.3,1.35);
            \draw[thick] (3.1,2.35) to [half gyrator, color=ForestGreen, i=$x_2$] (3.1,1.35);
           
            \filldraw[black] (2.7,2.35) circle (1pt);
            \filldraw[black] (2.7,1.35) circle (1pt);
            \node[anchor=south] at (2.7,2.4) {$m_{1}$};
            \node[anchor=north] at (2.7,1.3) {$m_{2}$};
            \draw[thin, <-] (2.7,1.85) node{$g_1$} ++(-60:0.2) arc (-60:170:0.2);
            \draw[thin,->] (2.9,2.65) to[short,l=$g_2$] (3.4,2.65) to (3.4,2.15);

            \draw[thick] (2.3,.15) to (3.1,.15);
            \draw[thick] (2.3,-.85) to (3.1,-.85);
            \draw[thick] (2.3,.15) to [half gyrator, mirror, color=red, i=$y_1$] (2.3,-.85);
            \draw[thick] (3.1,.15) to [half gyrator, color=red, i=$y_2$] (3.1,-.85);
           
            \filldraw[black] (2.7,.15) circle (1pt);
            \filldraw[black] (2.7,-.85) circle (1pt);
            \node[anchor=south] at (2.7,.2) {$m_{3}$};
            \node[anchor=north] at (2.7,-.9) {$m_{4}$};
            \draw[thin, <-] (2.7,-.35) node{$g_3$} ++(-60:0.2) arc (-60:170:0.2);
            \draw[thin,->] (2.9,.45) to[short,l=$g_4$] (3.4,.45) to (3.4,-.05);

            \begin{scope}[shift={(7,0)}]
            \draw[thick] (0.6, 0) to[capacitor,mirror, i_=$e_7^*$,l_=$L$] (0.6,1.5)
            to (-3,1.5);
            \draw[thick, ForestGreen] (-3,1.5) to [half gyrator, mirror, i_=$e_8^*$] (-3,0);
            \draw[thick] (-1.8,0) to[inductor, i_=$e_5^*$,l=$C$] (-1.8,1.5);
            \draw[thick, red] (-0.6,0) to[half gyrator, i_=$e_6^*$] (-0.6,1.5);
            \draw[thick] (-3,0) to (0.6,0);
            \filldraw[black] (-1.8,1.5) circle (1pt) node[anchor=south]{$f_3$};
            \filldraw[black] (-1.8,0) circle (1pt) node[anchor=north]{$f_4$};
            \draw[thin, <-] (-2.4,.75) node{$v_5$} ++(-60:0.2) arc (-60:170:0.2);
            \draw[thin, <-] (-1.2,.75) node{$v_6$} ++(-60:0.2) arc (-60:170:0.2);
            \draw[thin, <-] (0,.75) node{$v_7$} ++(-60:0.2) arc (-60:170:0.2);
            \draw[thin,->] (0.5,-0.2) to[short,l_=$v_8$] (1.0,-0.2) to (1.0,0.3);
            \end{scope}
        \end{circuitikz}
        \caption{ The isospectral circuit realized after taking the component-wise dual of the circuit in (a).}\label{fig:cwd_b}
    \end{subfigure}
   \caption{Two isospectral circuits related by the application of a component-wise duality transformation.} \label{fig:cwd_example}
\end{figure}

Consider first the circuit presented in Figure \ref{fig:cwd_a}.  The green gyrative edges form the gyrator $\mathfrak g_1 = (e_2,e_8)$ of ratio $G_1$; the red gyrative edges form $\mathfrak g_2 = (e_4,e_6)$ of ratio $G_2$. For simplicity, let all capacitors have capacitance $C$, inductors have inductance $L$, and all gyration ratios are set with respect to reference gyration ratio $G_0 = 1$. Its Lagrangian is:
\begin{equation}\label{eq:l_cwd_a}
    \begin{aligned}
        \mathscr L &= \tfrac{1}{2}(q_1 - q_2)(\dot \phi_2 -\dot \phi_1) - \tfrac{1}{2}(q_1 - q_2)(\dot \phi_4 -\dot \phi_3) + \tfrac{1}{2}(q_3 - q_4)(\dot \phi_6 -\dot \phi_5) - \tfrac{1}{2}(q_3 - q_4)(\dot \phi_8 -\dot \phi_7)\\
        &+ \tfrac{G_1}{2}(\phi_3 - \phi_2)(\dot \phi_8 - \dot \phi_5) - \tfrac{1}{2G_1}(q_1-q_2)(\dot q_4 - \dot q_3)
          + \tfrac{G_2}{2}(\phi_4 - \phi_1)(\dot \phi_7 - \dot \phi_6) - \tfrac{1}{2G_2}(q_2-q_1)(\dot q_3 - \dot q_4)\\
        &- \frac{(q_1 - q_2)^2}{2C} - \frac{(\phi_4 - \phi_3)^2}{2L} - \frac{(q_3 - q_4)^2}{2C} - \frac{(\phi_8 - \phi_7)^2}{2L}\\
    \end{aligned}
\end{equation}
Let us consider taking the dual of only the right sub-circuit. Treat this exclusively as a relabeling transformation on the Lagrangian for now. Therefore the following coordinates become:
\begin{equation}
    \begin{aligned}
        \phi_5 \rightarrow Q_1 \\
        \phi_6 \rightarrow Q_2 \\
        \phi_7 \rightarrow Q_3 \\
        \phi_8 \rightarrow Q_4 \\
        q_3 \rightarrow \Phi_1 \\
        q_4 \rightarrow \Phi_2 \\
    \end{aligned}
\end{equation}
Moreover, the inductor becomes a capacitor with capacitance $L$ and the capacitor becomes an inductor with inductance $C$. Substituting these new coordinates into the Lagrangian yields:
\begin{equation}\label{eq:L1}
    \begin{aligned}
        \mathscr L &= \tfrac{1}{2}(q_1 - q_2)(\dot \phi_2 -\dot \phi_1) - \tfrac{1}{2}(q_1 - q_2)(\dot \phi_4 -\dot \phi_3)
        + \tfrac{1}{2}(\Phi_1 - \Phi_2)(\dot Q_2 -\dot Q_1) - \tfrac{1}{2}(\Phi_1 - \Phi_2)(\dot Q_4 -\dot Q_3)\\[2pt]
        &+ {\tfrac{G_1}{2}(\phi_3 - \phi_2)(\dot Q_4 - \dot Q_1) - \tfrac{1}{2G_1}(q_1-q_2)(\dot \Phi_2 - \dot \Phi_1)}
          + { \tfrac{G_2}{2}(\phi_4 - \phi_1)(\dot Q_3 - \dot Q_2) - \tfrac{1}{2G_2}(q_2-q_1)(\dot \Phi_1 - \dot \Phi_2)} \\
        &- \frac{(q_1 - q_2)^2}{2C} - \frac{(\phi_4 - \phi_3)^2}{2L}
          - {\frac{(\Phi_1 - \Phi_2)^2}{2C}} - {\frac{(Q_4 - Q_3)^2}{2L}}
    \end{aligned}
\end{equation}
Isolating the gyrator terms yields:
\begin{equation}\label{eq:transcwd}
    \begin{aligned}
        \mathscr L_{\mathfrak g_1} &= {\tfrac{G_1}{2}(\phi_3 - \phi_2)(\dot Q_4 - \dot Q_1) - \tfrac{1}{2G_1}(q_1-q_2)(\dot \Phi_2 - \dot \Phi_1)}\\        \mathscr L_{\mathfrak g_2} &=  { \tfrac{G_2}{2}(\phi_4 - \phi_1)(\dot Q_3 - \dot Q_2) - \tfrac{1}{2G_2}(q_2-q_1)(\dot \Phi_1 - \dot \Phi_2)} 
    \end{aligned}
\end{equation}
Varying these equations yields:
\begin{equation}
    \begin{aligned}
        G_1(\dot \phi_3 - \dot \phi_2) &=  (\dot \Phi_2 - \dot \Phi_1)\\
        -G_1(\dot Q_4 - \dot Q_1) &= (\dot q_1-\dot q_2)\\
        G_2 (\dot \phi_4 - \dot \phi_1) &= (\dot \Phi_1 - \dot \Phi_2)\\
        -G_2 (\dot Q_3 - \dot Q_2) &=(\dot q_2-\dot q_1)\\
    \end{aligned}
\end{equation}
which are the constitutive relations for ideal transformers with turns ratios of $G_1/G_0$ and $G_2/G_0$, respectively. Following our convention for the ideal transformer presented in Section \ref{sec:combo}, one may realize the this circuit with gyrators in cascade as shown in Figure \ref{fig:cwd_b}. Here, we name the gyrators $\mathfrak{g}_a = (e_2,x_1)$ and $\mathfrak{g}_b = (x_2,e_8^*)$ with gyration ratios chosen such that $G_a/G_b = G_1$ and $\mathfrak{g}_c = (e_4,y_1)$ and $\mathfrak{g}_d = (y_2,e_6^*)$ with gyration ratios chosen such that $G_c/G_d = G_2$. These circuits have the same EOM and spectra. Then it can be more clearly seen that when taking a component-wise dual, any gyrator with only one gyrative edge on the dualized really-connected component becomes an ideal transformer.

We could have more easily presented the above example with a single gyrator, however, using the transformation in circuits with multiple gyrators will be essential to later arguments. We will use this componentwise duality transformation extensively in Section \ref{sec:fundrecip}.

\subsection{Partial Dual Transformations}\label{sec:rules}
\begin{figure}
  \begin{subfigure}{0.49\linewidth}
        \begin{circuitikz}[scale=2]
        
              \draw[thick] (0,0) to (-1,0) to[twoport,t=$S_1$] (-1,1) to (1,1) 
                           to[half gyrator, i_=$e_1$] (1,0) to (0,0);
        
              \draw[thick] (0,0) to[twoport,t=$S_2$] (0,1); 
              \filldraw[black] (0,0) circle (1pt) node[anchor=north]{$v_1$};
              \filldraw[black] (0,1) circle (1pt) node[anchor=south]{$v_2$};
              \draw[thin, <-] (-0.5,.5) node{$f_1$}  ++(-60:0.2) arc (-60:170:0.2);
              \draw[thin, <-] (0.5,.5) node{$f_2$}  ++(-60:0.2) arc (-60:170:0.2);
              \draw[thin,->] (.8,-.2) to[short,l_=$f_3$] (1.375,-.2) to (1.375,.4);
        
              \draw[thick] (1.75,0) to (2.75,0) to[twoport,t=$S_3$] (2.75,1) to (1.75,1) 
                           to [half gyrator, mirror, i_=$e_2$] (1.75,0);
        
              \filldraw[black] (2.25,0) circle (1pt) node[anchor=north]{$v_3$};
              \filldraw[black] (2.25,1) circle (1pt) node[anchor=south]{$v_4$};
              \draw[thin, <-] (2.25,.5) node{$f_4$}  ++(-60:0.2) arc (-60:170:0.2);
              \draw[thin,->] (2.95,.8) to (2.95,1.25) to[short,l_=$f_5$] (2.25,1.25);
        
        \end{circuitikz}
    \caption{The diagram above clearly has greater than one really-connected component. Here and in discussion regarding the circuit drawn above, we will suppose that sub-circuit $S_2$ is reciprocal. That is to say that sub-circuit $S_2$ is allowed to contain only inductors and capacitors, though we do allow non-linear elements such as Josephson junctions or coherent quantum phase slips. }\label{fig:generic_left}
  \end{subfigure}
  \begin{subfigure}{0.49\linewidth}
    \begin{circuitikz}[scale=2]
      \draw[thick] (0,0) to[twoport,t=$S_1$] (0,1)  to (1,1) to[half gyrator,i_=$e_1$] (1,0) to (0,0);
      \filldraw[black] (0.5,0) circle (1pt) node[anchor=north]{$v_1$};
      \filldraw[black] (0.5,1) circle (1pt) node[anchor=south]{$v_2$};
      \draw[thin, <-] (0.5,.5) node{$f_1$}  ++(-60:0.2) arc (-60:170:0.2);
      \draw[thin,->] (.6,-.2) to[short,l_=$f_3$] (1.35,-.2) to (1.35,.4);

\begin{scope}[xshift=6]
    \draw[thick] (1.5,-.5) to (2.5,-.5) to[twoport,t=$S_3$] (2.5,.5) to[twoport,t=$S_2^*$] (2.5,1.5) to (1.5,1.5) to[half gyrator, mirror,i_=$e_2$] (1.5,-.5);
      \filldraw[black] (2,-.5) circle (1pt) node[anchor=north]{$v_3$};
      \filldraw[black] (2.5,.5) circle (1pt) node[anchor=west]{$v_4$};
      \filldraw[black] (2,1.5) circle (1pt) node[anchor=north]{$v_5$};
      \draw[thin, <-] (2,.5) node{$f_4$}  ++(-60:0.2) arc (-60:170:0.2);
      \draw[thin,->] (2.6,1.3) to (2.6,1.7) to[short,l_=$f_5$] (2,1.7);
\end{scope}

    \end{circuitikz}
    \caption{Another drawing of a circuit with greater than one really-connected component. We also demand that sub-circuit $S_2^*$ is reciprocal. }\label{fig:partialdualb}
  \end{subfigure}
  \caption{The labeled squares denote two-port devices with some finite number of internal nodes and loops. 
  We aim to demonstrate that, so long as $S_2$ is reciprocal, there exists a reciprocal two-port lumped-element sub-circuit $S_2^*$ such that the circuit in (a) has a Hamiltonian related to that of (b) by a canonical transformation}\label{fig:generic}
\end{figure}

We will now show that linear gyrators dualize lumped-element circuits in a manner that is independent of linearity. To this end, consider the drawings in Figure \ref{fig:generic}. We have partitioned a circuit into three sub-circuits $S_1$, $S_2$, and $S_3$, which we leave generic up to the assumption that $S_2$ is reciprocal (it is composed exclusively of reciprocal circuit elements). As discussed in Section \ref{sec:coor}, we are free to independently choose coordinates for individual really-connected components, a freedom we use extensively.
We will express the really-connected component containing $S_1$ and $S_2$ in edge-vertex coordinates and the remaining really-connected component will instead be expressed in face-edge coordinates.

Since there are two really-connected components, there are also gauge freedoms associated with each one. For simplicity, we will gauge $\phi_{v_2} = q_{f_4} = 0$. We can then write the Lagrangian for the total system as 
\begin{equation}
    \mathscr L = 
    \mathscr L_{S_1} + 
    \mathscr L_{S_2} + 
    \mathscr L_{S_3} +
    \frac{1}{2}Q_{e_1}\dot\phi_{v_1} -\frac{1}{2} q_{f_5} \dot\Phi_{e_2} + 
    \frac{1}{2}G_{(e_1,e_2)} \phi_{v_1}\dot\Phi_{e_2} - \frac{1}{2G_{(e_1,e_2)}}Q_{e_1}\dot q_{f_5}
\end{equation}
where $\mathscr L_{S_1}$ and $\mathscr L_{S_2}$ depend only upon $\phi_{v_1}$ and degrees of freedom internal to their sub-circuits. 
Likewise, $\mathscr L_{S_3}$ depends upon only $q_{f_5}$ and internal degrees of freedom. 

Observe that the EOM are 
\begin{subequations}
    \begin{align}
      0 &= \frac{\delta}{\delta \phi_{v_1}} \int \mathrm d t\, \mathscr L_{S_1} +
      \frac{\delta}{\delta \phi_{v_1}} \int \mathrm d t\, \mathscr L_{S_2}  - \frac{1}{2} \dot Q_{e_1} + \frac{G_{(e_1,e_2)}}{2}\dot\Phi_{e_2}. \\ 
      0 &= \dot \phi_{v_1} - \frac{1}{G_{(e_1, e_2)}} \dot q_{f_5} \\ 
      0 &= \frac{\delta}{\delta q_{f_5}} \int \mathrm d t \mathscr L_{S_3}- \frac{1}{2} \dot\Phi_{e_2} + \frac{1}{2 G_{(e_1,e_2)}} \dot Q_{e_1} \\ 
      \end{align}
    \end{subequations}
    The second equation is a constraint rather than a dynamical equation: $Q_{e_1}$ appears in $\LL$ only as a factor multiplying $\dot\phi_{v_1} - \dot q_{f_5}/G_{(e_1,e_2)}$, so it acts as a Lagrange multiplier, and we are free to solve its equation of motion and substitute $\phi_{v_1} \mapsto \frac{1}{G_{(e_1,e_2)}}\big(q_{f_5} + c\big)$, for some fixed $c$, into $\LL_{S_2}$ without changing the remaining equations of motion. Consistency requires the same factor of $G_{(e_1,e_2)}$ to accompany the relabeling transformation on the degrees of freedom internal to $S_2$, which is therefore
\begin{equation}\label{eqn:Gscaling}
\begin{aligned}
    Q_e &\rightarrow G_{(e_1,e_2)}\,\Phi_e \\
    \phi_v &\rightarrow q_v / G_{(e_1,e_2)}
\end{aligned}
\end{equation}
rather than the exchange of Section \ref{sec:transformations}. The factors of $G_{(e_1,e_2)}$ cancel in the symplectic term of $\mathscr L_{S_2}$, while each element energy becomes that of its dual evaluated at a rescaled argument,
\begin{equation}\label{eqn:elementrules}
    E^{I*}_e(\Phi) = E^C_e\big(G_{(e_1,e_2)}\Phi\big),
    \qquad
    E^{C*}_e(Q) = E^I_e\big(Q/G_{(e_1,e_2)}\big),
\end{equation}
or $L^*_e = C_e/G_{(e_1,e_2)}^2$ and $C^*_e = G_{(e_1,e_2)}^2 L_e$ for linear elements. What remains is a circuit Lagrangian recognizable as coming from Figure \ref{fig:partialdualb}, where $S_2^*$ indicates the dual of sub-circuit $S_2$ with its parameters rescaled in this way, and the port carries charge $q_{f_5} + c$. For a linear $S_2$ the constant $c$ does not affect the spectrum. For a non-linear $S_2$ it acts as an offset charge, as in the transmon, and the two circuits are isospectral when $c$ equals the conserved value of $G_{(e_1,e_2)}\phi_{v_1} - q_{f_5}$.

The partial dual transformation is a point transformation and can therefore be undone: a sub-circuit moved into series with $e_2$ may be moved back into parallel with $e_1$. More generally, since duality is an involution, any reciprocal sub-circuit in series with $e_2$ is the dual of one in parallel with $e_1$, and so may itself be moved into parallel with $e_1$. Exchanging the roles of $e_1$ and $e_2$ gives the same statements in the other direction, with the rescaling (Equation \ref{eqn:elementrules}) unaffected since it depends only on $G_{(e_1,e_2)}^2$. Together, these allow a reciprocal sub-circuit in series or in parallel with either gyrative edge of a gyrator to be moved to the other side and back, as in Figure \ref{fig:partialduality} and step 6 of the algorithm in Section \ref{sec:family}.

Through the partial dual transformation, we recover the well-known insight that gyrators can simulate inductors (capacitors) using capacitors (inductors) \cite{tellegen}. There are practical limits to the values of inductances and capacitances that can be reliably fabricated because of nonzero minimum feature size, finite chip size, and other considerations \cite{manucharyan_fluxonium_2009,devoret_leshouches,vool_introduction_2017}. As such, we draw attention to the fact that inductors with large inductance may be simulated with gyrators and capacitors with large capacitance. Likewise, capacitors with small capacitance may be simulated by using inductors with small inductance and a gyrator. The partial dual transformation extends Tellegen's conclusion to non-linear elements, with the element rescaled according to (\ref{eqn:elementrules}). The quantum phase slip, dual to the Josephson junction, has been primarily implemented with superconducting nanowires \cite{mooij_superconducting_2006}. However, Josephson junctions have more mature fabrication procedures and therefore, together with gyrators (which have been implemented in superconducting hardware \cite{KollarNR}), they provide an alternative means to engineer a QPS. In particular a Josephson junction with energy $-E_J\cos(2\pi\Phi/\Phi_0)$ across a gyrator with ratio $G$ is the QPS with energy $-E_J\cos\big(2\pi Q/(G\Phi_0)\big)$, with charge period $G\Phi_0$.

\subsection{Continuous families of partial dual circuits}\label{sec:family}
\begin{figure}
    \centering
    \begin{subfigure}{0.49\linewidth}
    \begin{circuitikz}[scale=2.5]
    \draw[thick] (0,0) to[josephson] (1,0) to[josephson] (1,1) to[capacitor] (0,1) to[inductor] (0,0) to[cI] (-.5,.5) to (0,1) to (.4,1.4) to[inductor](1,1) to[inductor] (1.7,.5) to[capacitor] (1,0) ;  
    \end{circuitikz}
    \caption{} \label{fig:sub_a}
    \end{subfigure}
    \begin{subfigure}{0.49\linewidth}
    \begin{circuitikz}[scale=2.5]
    \draw[thick] (0,0) to[josephson] (1,0) to[josephson] (1,1) to[capacitor] (0,1) to[inductor] (0,0) to[cI] (-.6,0) to[half gyrator] (-.6,1) to (0,1)  to (.4,1.4) to[inductor](1,1) to[inductor] (1.7,.5) to[capacitor] (1,0) ;  
    \draw[thick] (-1.2, 0) to[half gyrator, mirror] (-1.2,1);
    \end{circuitikz}
    \caption{} \label{fig:sub_b}
    \end{subfigure}
    \begin{subfigure}{0.49\linewidth}
    \begin{circuitikz}[scale=2.5]
    \draw[thick] (-0.5,0) to[josephson] (0.5,0) to[josephson] (0.5,1) to[capacitor] (-0.5,1) to[inductor] (-0.5,0) to (-.7,0) to[half gyrator] (-.7,1) to (-0.5,1)  to (-0.1,1.4) to[inductor](0.5,1) to[inductor] (1.2,.5) to[capacitor] (0.5,0) ;  
    \draw[thick] (-1.3, 0) to[half gyrator, mirror] (-1.3,1) to (-1.8,1) to[josephson] (-1.8,0) to (-1.3,0);
    \end{circuitikz}
    \caption{} \label{fig:sub_c}
    \end{subfigure}
    \begin{subfigure}{0.49\linewidth}
    \begin{circuitikz}[scale=2.5]
    \draw[thick] (0.3,0) to[josephson] (1.3,0) to[josephson] (1.3,1) to[capacitor] (0.3,1);
    \draw[thick] (0.3,0) to[half gyrator] (0.3,1);
    \draw[thick] (0.3,1)  to (.7,1.4) to[inductor](1.3,1) to[inductor] (2,.5) to[capacitor] (1.3,0) ;  
    \draw[thick] (-.3, 0) to[half gyrator, mirror] (-.3,1) to (-1.3,1) to[josephson] (-1.3,0) to[capacitor] (-.3,0);
    \end{circuitikz}
    \caption{} \label{fig:sub_d}
    \end{subfigure}
    \begin{subfigure}{0.49\linewidth}
    \begin{circuitikz}[scale=2.5]
    \draw[thick] (0,0) to[half gyrator] (0,1);
    \draw[thick] (0,0) to[josephson] (1,0) to[josephson] (0,1);
    \draw[thick] (0,1) to[inductor] (1,1) to[capacitor] (1,0) ;  
    \draw[thick] (-.6, 0) to[half gyrator, mirror] (-.6,1) to (-1.6,1) to[josephson] (-1.6,.5)  to[capacitor] (-1.6,0) to (-.6,0);
    \draw[thick] (-1.1,0) to[capacitor] (-1.1,.5) to[inductor] (-1.1,1);
    \end{circuitikz}
    \caption{} \label{fig:sub_e}
    \end{subfigure}
    \begin{subfigure}{0.49\linewidth}
    \begin{circuitikz}[scale=2.5]
    \draw[thick] (0.3,0) to[half gyrator] (0.3,1);
    \draw[thick] (0.3,0) to[short] (1.3,0) to[josephson] (0.3,1);
    \draw[thick] (0.3,1) to[inductor] (1.3,1) to[capacitor] (1.3,0) ;  
    \draw[thick] (-.3, 0) to[half gyrator, mirror] (-.3,1) to (-1.5,1) to[josephson] (-1.5,.5)  to[capacitor] (-1.5,0) to (-.3,0);
    \draw[thick] (-1,0) to[capacitor] (-1,.5) to[inductor] (-1,1);
    \draw[thick] (-.7,0) to[cI] (-.7,1);
    \end{circuitikz}
    \caption{} \label{fig:sub_f}
    \end{subfigure}
    \begin{subfigure}{0.49\linewidth}
    \begin{circuitikz}[scale=2.5]
    \draw[thick] (.9,0) to[half gyrator] (.9,1);
    \draw[thick] (.9,1) to[inductor] (1.6,.5) to[capacitor] (.9,0) ;  
    
    \draw[thick] (.3, 0) to[half gyrator, mirror] (.3,1) to[cI] (-.7,1) to (-1.5,1) to[josephson] (-1.5,.5)  to[capacitor] (-1.5,0) to (.3,0);
    \draw[thick] (-1,0) to[capacitor] (-1,.5) to[inductor] (-1,1);
    \draw[thick] (-.7,0) to[cI] (-.7,1);
    \end{circuitikz}
    \caption{} \label{fig:sub_g}
    \end{subfigure}
    \begin{subfigure}{0.49\linewidth}
    \begin{circuitikz}[scale=2.5]
    \draw[thick] (.9,0) to[half gyrator] (.9,1); 
    
    \draw[thick] (.3, 0) to[half gyrator, mirror] (.3,1) to[cI] (-.7,1) to (-1.5,1) to[josephson] (-1.5,.5)  to[capacitor] (-1.5,0) to (-.7,0) to (-.2, .4) to[inductor] (.3,0) to[capacitor] (-.7,0);
    \draw[thick] (-1,0) to[capacitor] (-1,.5) to[inductor] (-1,1);
    \draw[thick] (-.7,0) to[cI] (-.7,1);
    \end{circuitikz}
    \caption{}
    \label{fig:sub_h}
    \end{subfigure}
    \begin{subfigure}{0.49\linewidth}
    \begin{circuitikz}[scale=2.5]
    
    \draw[thick] (.3, 0) to[short] (.3,1) to[cI] (-.7,1) to (-1.5,1) to[josephson] (-1.5,.5)  to[capacitor] (-1.5,0) to (-.7,0) to (-.2, .4) to[inductor] (.3,0) to[capacitor] (-.7,0);
    \draw[thick] (-1,0) to[capacitor] (-1,.5) to[inductor] (-1,1);
    \draw[thick] (-.7,0) to[cI] (-.7,1);
    \end{circuitikz}
    \caption{}
    \label{fig:sub_i}
    \end{subfigure}
    \caption{A large number of circuits that are equivalent for any nonzero finite number $G$. Note that (a) and (i) are dual up to the canonical rescaling (Equation \ref{eqn:Gscaling}): each element of (i) is the $G$-scaled dual of an element of (a). }
    \label{fig:partialduality}
\end{figure}
For an arbitrary connected nonsingular reciprocal circuit with Hamiltonian $H$, it is possible to create any number of non-reciprocal circuits that give rise to a Hamiltonian $H'$ where $H'$ and $H$ have the same eigenvalues. 
We present an algorithm for constructing such circuits from any reciprocal circuit. 
\begin{enumerate}
    \item Draw a reciprocal, connected, nonsingular, well-posed LCG circuit, $X$. 
    \item Choose a vertex $v$ of the circuit $X$, and draw an additional three nodes $v'$, $u$, and $u'$. 
    \item Redraw the edges $e$ for which $A_{ev} \neq 0$ in $X$ such that some remain attached to $v$ and others are instead transplanted to be incident upon $v'$
    \item Draw a gyrative edge from $u$ to $u'$ and a gyrative edge from $v$ to $v'$. 
    \item Choose any finite nonzero number $G$, and fix the gyration ratio of $((u,u'),(v,v'))$ to be equal to $G$. 
    \item By using the equivalence transformations of Section \ref{sec:rules}, move any number of sub-circuit elements across $((u,u'),(v,v'))$. 
\end{enumerate}
We remark that if one elects to move \emph{all} of $X$ across the gyrator $((u,u'),(v,v'))$, the resulting circuit has the same form as the dual of $X$ (in the sense that their drawings are the same), but the resulting Lagrangian differs from that of the dual of $X$ by the canonical rescaling (\ref{eqn:Gscaling}), under which each element parameter carries the factors of (\ref{eqn:elementrules}). The rescaling acts reciprocally on the members of each conjugate pair and therefore preserves the symplectic form and the spectrum.
An example of drawings appearing as steps in the algorithm proposed above may be found in Figure \ref{fig:partialduality}. 

We remark that the algorithm posed above may be restarted from step 1 any number of times without altering the EOM. In particular, it is possible to execute the algorithm above with $n > 1$ gyrators separating really-connected components. The problem of counting all possible circuit topologies that may be acquired by this process given some initial circuit $X$ may be a problem of interest as well. We denote the following corollaries of the circuit equivalence transformations as the \textit{gyrator deletion} rules which will play an integral part in determining which Hamiltonians may only be represented with non-reciprocal components.

\begin{figure}
    \centering
    \begin{circuitikz}[scale=2.5]

        \draw[thick] (1, 0) to[inductor, mirror] (1,1) to (0,1); 
        \draw[thick] (.5, 0) to[capacitor, mirror] (.5,1); 
        \draw[thick] (1,0) to (0, 0) to[cI, mirror] (0,1); 

        \draw[{Triangle[length=3mm, width=4mm]}-{Triangle[length=3mm, width=4mm]}, line width=4pt] (1.3, 0.5) -- (2.2, 0.5);

        \begin{scope}[shift={(2, 0)}]
            \draw[thick] (1, 0) to[half gyrator, mirror] (1,1); 
            \draw[thick] (.5, 0) to[josephson, mirror] (.5,1); 
            \draw[thick] (.5,1) to (1,1); 
            \draw[thick] (.5,0) to (1,0); 
            
            \begin{scope}[shift={(1.6, 0)}]
                \draw[thick] (1, 0) to[inductor, mirror] (1,1) to (0,1); 
                \draw[thick] (.5, 0) to[capacitor, mirror] (.5,1); 
                \draw[thick] (1,0) to (0, 0) to[half gyrator] (0,1); 
            \end{scope}
        \end{scope}

    \end{circuitikz}
    \caption{A traditional ill-posed circuit with a QPS without a series inductance and another ill-posed circuit that are related by application of the partial dual transformation to insert a gyrator.}
    \label{fig:illposeddual}
\end{figure}

Two further corollaries of the circuit equivalence transformations, \textit{the open-terminated} and \textit{closed-terminated gyrator deletion} rules, play an integral part in determining which Hamiltonians may only be represented with non-reciprocal components.  These corollaries specialize the constitutive relation of a gyrator to the case in which one of its two gyrative edges is constrained, by adjacent circuit topology, to carry no current (a series open) or no voltage (a parallel short). Each constraint propagates through the gyrator to its paired gyrative edge, allowing the entire gyrator to be deleted from the circuit and its other gyrative edge replaced by a trivial element. As the reader will see, one is unlikely to draw a circuit diagram with such gyrators but will find that repeated use of the partial dual transformation will often result in these structures that make a gyrator's contribution trivial to a circuit's dynamics. Thus we use these rules to reduce LCG circuits in Section \ref{sec:fundrecip}.

A gyrative edge in series with an open may be deleted: it carries no current, and by the gyrator constitutive relation its paired gyrative edge then has no voltage across it and becomes a short. This is demonstrated by the transformation from Figure \ref{fig:sub_h} to Figure \ref{fig:sub_i}. Intuitively, a gyrative edge without current through it demands that its corresponding edge has no potential difference across it by the gyrator constitutive relation. Meanwhile, if a gyrative edge is shorted to itself, then it may be deleted and its other half is identically an open. A gyrative edge without potential difference across it will cause its corresponding half to have no current through it.

Finally, we note that this point transformation given by the partial dual preserves the EOM and necessarily preserves well-posedness.  For an ill-posed circuit due to a QPS without a series inductance or a JJ without a parallel capacitance, any circuit that may be related to it by a composition of the partial duality transformation due is itself ill-posed and it will have multi-valued constitutive relations (see Figure \ref{fig:illposeddual}).  In reality, these non-linear superconducting circuit elements intrinsically behave as the linear element required to constrain the solution (e.g. JJ's have their own capacitance and QPS have their own inductance).\footnote{See \cite{rymarz_consistent_2022} for greater discussion}

\section{Reciprocally-Simulable Circuits}\label{sec:fundrecip}

Sections \ref{sec:combo} and \ref{sec:transformations} provide a set of canonical transformations of LCG circuits: duality, the partial-dual transformation, the gyrator series and parallel deletion rules, the open-terminated and closed-terminated deletion rules, and the cascade-to-transformer rule. Each preserves the spectrum of the circuit Hamiltonian. Therefore, we investigate if, for a circuit $C$ with an arbitrary number of gyrators, there exists a means to replicate its dynamics without non-reciprocal elements. Equivalently,  we ask if there exists a circuit of exclusively reciprocal elements whose Hamiltonian is unitarily related to the Hamiltonian of $C$. If so, we call $C$ reciprocally-simulable. Throughout, we consider an ideal transformer realized by a gyrator cascade as a reciprocal element.

With all of our transformations in hand, we introduce the non-reciprocal coupling graph (NCG), a derived multigraph of a circuit.\footnote{Refer to Appendix \ref{sec:appNCG} for all formal definitions and relevant proofs.} For a given LCG circuit, we consider every really-connected component (RCC) a vertex in the NCG, and every gyrator an edge between the RCCs containing its two edges; a gyrator with both edges in the same RCC represents a self-loop. Parallel NCG edges form cycles of length two and self-loops cycles of length one, so a bipartite NCG has no self-loops but may have parallel edges. The NCG records gyrator-mediated coupling between RCCs but contains no information about internal structure\footnote{With the caveat that the degree of an NCG vertex indicates the number of gyrator gyrative edges inside an RCC.} of each sub-circuit: two circuits with the same NCG may differ in internal elements and topology within each RCC. Naturally, a circuit that is made purely of reciprocal elements will have an NCG without any edges.

We convert our canonical transformations into operations on the NCG. As shown in Figure \ref{fig:generic}, a reciprocal sub-circuit in parallel or in series with a gyrative edge can be moved across the gyrator as its dual by the partial dual transformation of Section \ref{sec:rules}. Two operations follow:
\begin{itemize}
    \item Degree Zero Vertices Are Deleted: A degree zero vertex corresponds to an RCC without a gyrator. Its Lagrangian decouples additively and it cannot cause a circuit's non-reciprocal dynamics (Theorem \ref{thm:degree_zero}).
    \item Degree One Vertices Are Deleted: A degree one vertex contains exactly one gyrative edge. Its reciprocal content is moved across the gyrator as a dual sub-circuit in series with the far edge, and the gyrator is then deleted by the open-terminated rule (Theorem \ref{thm:degree_one}). No gyrator or transformer survives the move.
\end{itemize}
Repeated application of these operations deletes every vertex exactly when the NCG is a forest (a graph without cycles), and so every nonsingular, well-posed LCG circuit with a forest NCG is reciprocally-simulable by a circuit of capacitive and inductive elements alone (Theorem \ref{thm:tree_reciprocal}). Figure \ref{fig:LCG_and_NCG_example} provides an example: the repeated application of the degree one rule corresponds to moving every reciprocal element towards RCC 3, leaving a single degree zero vertex which may be deleted.

\begin{figure}[htbp]
    \centering
    \begin{subfigure}{.32\linewidth}
        \centering
        \resizebox{\linewidth}{!}{
            \begin{circuitikz}[scale=2.5]
            \draw[thick] (0.3,0) to[josephson] (1.3,0) to [josephson] (1.3,1) to[capacitor] (0.3,1);
            \draw[thick] (0.3,0) to[half gyrator] (0.3,1);
            \draw[thick] (0.3,1)  to (.7,1.4) to[inductor](1.3,1) to[inductor] (2,.5) to[capacitor] (1.3,0) ;  
            \draw[thick] (-.3, 0) to[half gyrator, mirror] (-.3,1) to [half gyrator, mirror](-1.3,1) to[josephson] (-1.3,0) to[capacitor] (-.3,0);
            \draw[thick] (-1.3, 1.6) to[half gyrator, mirror](-.3,1.6) to (-.3,2.5) to [capacitor](-1.3,2.5) to (-1.3, 1.6);
        \end{circuitikz}
        }
        \caption{An LCG circuit with three really-connected components and two gyrators.}
        \label{fig:fig1_panel_circuit}
    \end{subfigure}\hfill
    \begin{subfigure}{.28\linewidth}
        \centering
        \resizebox{\linewidth}{!}{
            \begin{tikzpicture}[scale=1.5,
                vertex/.style={circle, draw=black, fill=black, inner sep=2pt}]
                \node[vertex, label={left:RCC 1}] (v1) at (0,1) {};
                \node[vertex, label={below left:RCC 2}] (v2) at (0,0) {};
                \node[vertex, label={right:RCC 3}] (v3) at (1,0) {};
                \draw[thick] (v1) -- (v2);
                \draw[thick] (v2) -- (v3);
            \end{tikzpicture}
        }
        \caption{The NCG of the circuit in (a). Each gyrator becomes an edge between the RCCs containing its two halves.}
        \label{fig:fig1_panel_ncg}
    \end{subfigure}\hfill
    \begin{subfigure}{.32\linewidth}
        \centering
     \resizebox{\linewidth}{!}{
        \begin{circuitikz}[scale=2.5]
                \draw[thick] (0.3,0) to[josephson] (1.3,0) to[josephson] (1.3,1) to[capacitor] (0.3,1);
                \draw[thick] (0.3,0) to[inductor] (0.3,1);
                \draw[thick] (-0.2,0) to[cI] (-0.2,1);
                \draw[thick] (-0.7,0) to[capacitor] (-0.7,1);
                \draw[thick](-0.7,0) to (.3,0);
                \draw[thick](-0.7,1) to (.3,1);
                \draw[thick] (0.3,1)  to (.7,1.4) to[inductor](1.3,1) to[inductor] (2,.5) to[capacitor] (1.3,0) ;  
           
            \end{circuitikz}
        }
        \caption{A circuit made only of reciprocal elements that has the same dynamics as the circuit in (a). Its NCG is trivial. }
        \label{fig:fig1_panel_blank}
    \end{subfigure}
    
    \caption{LCG Circuit and its NCG. \textbf{(a)} Circuit and its really-connected components (RCCs). \textbf{(b)} NCG for the circuit in (a). \textbf{(c)} The circuit made of evacuating every RCC in (a) towards RCC 3, placing all reciprocal circuit elements there.}
    \label{fig:LCG_and_NCG_example}
\end{figure}

Alternatively, one may consider the class of bipartite NCGs of which forest graphs are a proper subset. If one allows for the use of ideal transformers and treats a cascade of gyrators as a reciprocal element, then any circuit with a bipartite NCG is reciprocally-simulable. Particularly, its spectrum may be realized by a circuit with exclusively capacitive elements, inductive elements, and ideal transformers. For a circuit with a bipartite NCG graph, partition the really-connected components into two sets such that corresponding vertices in the NCG are not adjacent to vertices in its own partition. Apply the component-wise duality transformation (Section \ref{sec:cwd}) to all of the really-connected components in one partition. As all the gyrators in this circuit will connect distinct RCCs (as otherwise the NCG would have a self-loop and fail to be bipartite), then all of these gyrators will be converted to transformers thereby producing a reciprocal circuit.\footnote{See Theorem \ref{thm:bipartite} in the Appendix for the proof}

\section{Conclusion}
This work continues our previous efforts \cite{fluxcharge,osborne2023symplectic} towards a general flux-charge symmetric theory of circuit mechanics. The formalism guarantees the existence of a symplectic reduction for any nonsingular, well-posed LCG circuit with arbitrarily non-linear reciprocal one-port elements.
The circuit's topology is manifest in its Lagrangian, and we provide a taxonomy of its constraints, separating null-vector constraints into pure and gyrator-mixing categories and distinguishing them from those generated by continuous symmetries. As we maintain the flux-charge symmetry of the formalism, we extend the circuit duality point transformation introduced in \cite{fluxcharge} to LCG circuits. Additionally, our Lagrangian formalism allows us to introduce families of isospectral circuits using the partial dual transformation, which acts directly on the circuit Lagrangian. With the partial-dual transformation, we extend Tellegen's classic result \cite{tellegen} to non-linear elements such as the Josephson junction and quantum phase slip. Finally, our gyrator reduction rules act as canonical transformations on the circuit Lagrangian, and their composition provides a sufficient graph-theoretic condition for reciprocal simulability by the examination of the non-reciprocal coupling graph. We prove that any circuit whose NCG is a forest graph is reciprocally-simulable and its Hamiltonian admits a unitarily equivalent circuit realization without gyrators. With the addition of ideal transformers, any circuit with a bipartite NCG is reciprocally-simulable. Practically, the classification of Section \ref{sec:fundrecip} sharpens the role of gyrators in circuit design: gyrators in a reciprocally-simulable circuit are a representational choice, while gyrators in a circuit that we cannot reduce may be dynamically necessary. Identifying simple circuits whose dynamics require non-reciprocal elements could inform the realization of complex non-reciprocal elements for quantum information processing.

\section{Acknowledgements}
We acknowledge useful discussions with Basil Smitham, J. Wilson Staples, and Siddhant Midha. We acknowledge support from the Co-design Center for Quantum Advantage (C$^2$QA) under Contract No. DESC0012704.

\section{Ethics Declarations}
\textbf{Declaration of Generative AI Assistance:} During the preparation of this work, the authors used Anthropic's Claude Fable 5.1 \cite{anthropic2026claude} to assist in the proof of Theorem \ref{thm:bipartite}. The authors independently evaluated, restructured, and verified all mathematical steps, and take full responsibility for the correctness of the final proof. 

\textbf{Competing Interests:} Princeton University Professor Andrew Houck was also a consultant for Quantum Circuits Incorporated (QCI), now D-Wave. Due to his income from QCI, Princeton University has a management plan in place to mitigate a potential conflict of interest that could affect the design, conduct, and reporting of this research.

\section{Software Release}
An accompanying Python package, \texttt{fluxcharge}, for designing LCG circuits within our formalism is available at \url{https://github.com/cgs2155/flux-charge}. It is developed and maintained by Ciro Salcedo. Bug reports and questions can be submitted via the project's issue tracker.

\begin{appendix}

\section{Formalizing Reciprocally-Simulable Circuits}\label{sec:appNCG}
With the canonical transformations provided by the gyrator deletion rules and partial duality, we may begin to classify which circuit's dynamics may be replicated without non-reciprocal elements. 

First we formalize the ideal transformer as written in Section \ref{sec:combo} to classify it as a reciprocal element:
\begin{defn}[\textbf{Cascade Pair}]\label{defn:cascadepair}
    Let $C$ be an LCG circuit. Two gyrators of $C$ form a \textbf{cascade pair} if one gyrative edge of each shares both endpoints with one edge of the other and those endpoints are incident on no other edge.
\end{defn}
\begin{defn}[\textbf{Reciprocally-Simulable Circuit}]\label{defn:recip_sim}
    A circuit $C$ is said to be \textbf{reciprocally-simulable} if there exists a circuit $C'$ without non-reciprocal elements whose Hamiltonian is unitarily related to that of $C$. An ideal transformer, realized as a cascade pair (Definition \ref{defn:cascadepair}), is counted as a reciprocal element; equivalently, $C$ is reciprocally-simulable if there exists such a $C'$ in which every gyrator belongs to a cascade pair.
\end{defn}

We remark that a point transformation of a Lagrangian leaves the EL equation invariant. The Legendre transform sends a point transformation to a canonical transformation of the Hamiltonian and, following quantization, is a unitary transformation on the Hilbert space. Much of the following section involves transforming a circuit $C$ to $C'$ using compositions of point transformations. These operations are therefore spectrum-preserving and the Hamiltonians of both circuits will be unitarily equivalent. Our theorems will use proofs made on the Lagrangian level, but conclusions about the Hamiltonian will follow automatically. To probe the reciprocal simulability of circuits we define the following:

\begin{defn}[\textbf{Reciprocal Block}]\label{defn:recip_block}
    Let $C = (\mathcal V, \mathcal E, \mathcal F, \mathcal C, \mathcal I, \mathcal G, \fG)$ be a circuit, and let $G_{\mathrm{rec}}$ be the subgraph obtained from the circuit graph by keeping only the reciprocal edges $\mathcal I \cup \mathcal C$ and deleting every gyrative edge. A \textbf{reciprocal block} of $C$ is a connected component of $G_{\mathrm{rec}}$ containing at least one edge, viewed as the sub-circuit
    \begin{equation}
        S = \big(\mathcal V_S,\, \mathcal E_S,\, \mathcal F_S,\, \mathcal C \cap \mathcal E_S,\, \mathcal I \cap \mathcal E_S,\, \emptyset,\, \emptyset\big),
    \end{equation}
    where $\mathcal V_S$ is the component's vertex set,
    \begin{equation}
        \mathcal E_S = \{\, e \in \mathcal I \cup \mathcal C : A_{ev} \neq 0 \Rightarrow v \in \mathcal V_S \,\}
    \end{equation}
    its reciprocal edges, and $\mathcal F_S$ the loop set of $S$ under the convention of Section \ref{sec:notation}. By construction a reciprocal block contains only reciprocal elements, $\mathcal G \cap \mathcal E_S = \emptyset$.
\end{defn}

Because $G_{\mathrm{rec}}$ is obtained by deleting edges from the circuit graph, each of its connected components lies inside a single connected component of the full graph: every reciprocal block is therefore contained in a unique RCC. A single RCC may contain many reciprocal blocks, joined to one another (and to other RCCs) only through gyrative edges, which $G_{\mathrm{rec}}$ omits. Distinct reciprocal blocks share no vertices, and every reciprocal edge of $C$ lies in exactly one block.

\begin{defn}[\textbf{Non-Reciprocal Coupling Graph}]\label{defn:NCG}
    Let $C$ be an LCG circuit. The \textbf{non-reciprocal coupling graph} (NCG) of $C$ is the multigraph $G_N = (\mathcal{V}', \mathcal{E}')$ defined by:
    \begin{enumerate}
        \item $\mathcal{V}'$ is the set of really-connected components of $C$.
        \item For each gyrator $\mathfrak{g} = (e_1, e_2) \in \fG$, $\mathcal{E}'$ contains one edge $\varepsilon_\mathfrak{g}$ whose endpoints are the RCCs containing $e_1$ and $e_2$. If these coincide, $\varepsilon_\mathfrak{g}$ is a self-loop.
    \end{enumerate}
    The degree of an NCG vertex counts incidences, so a self-loop contributes $2$. Here, a self-loop is a $1$-cycle and a pair of parallel edges a $2$-cycle.
\end{defn}

The NCG is in general a multigraph: distinct gyrators between the same pair of RCCs contribute distinct parallel edges, and a gyrator whose edge pair lies entirely within a single RCC contributes a self-loop. Every gyrative edge of $C$ has both its endpoints in a single RCC, since RCCs are connected components of the circuit graph using all edge types. A reciprocal block is a connected reciprocal sub-circuit of $C$, finer than an RCC as an RCC may contain multiple reciprocal blocks separated by gyrative edges. Reciprocal blocks are the natural objects to which the partial duality construction is applied; they need not coincide with NCG vertices. 

\begin{defn}[\textbf{Series and Parallel Sub-Circuits}]\label{defn:seriesparallel}
    Let $e = (v, v')$ be a gyrative edge and let $S$ be a sub-circuit of reciprocal elements. $S$ is \textbf{in parallel with} $e$ if every vertex of $S$ other than $v$ and $v'$ is incident only on edges of $S$, so that $S$ shares no vertex with the rest of the circuit other than the endpoints of $e$ and $\mathscr L_S$ depends on the rest of the circuit only through $\Phi_e$. We say $S$ is \textbf{in series with} $e$ if every edge incident on $v$ other than $e$ belongs to $S$, and $S$ shares only vertex $v$ and one other vertex with the rest of the circuit, so that $\mathscr L_S$ depends on the rest of the circuit only through $Q_e$.
\end{defn}

With the partial dual transformation of Section \ref{sec:rules} and the gyrator deletion rules we may now perform operations exclusively on the NCG. To this end we define two operations to reduce it:

\begin{thm}[\textbf{Degree-zero deletion}]\label{thm:degree_zero}
    Let $C$ be a nonsingular, well-posed LCG circuit and let $G_N$ be its NCG. Suppose $v \in \mathcal{V}'$ has degree zero in $G_N$, with corresponding really-connected component $R$. Let $C \setminus R$ denote the circuit obtained by deleting $R$ from $C$. Then the Lagrangian splits as
    \begin{equation}\label{eqn:degree_zero_split}
        \mathscr L_C = \mathscr L_R + \mathscr L_{C \setminus R},
    \end{equation}
    so $C$ is reciprocally-simulable if $C \setminus R$ is.
\end{thm}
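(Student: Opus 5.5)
The plan is to read the splitting (\ref{eqn:degree_zero_split}) directly off the block structure of the Lagrangian (\ref{eqn:no_omega_lagrangian}). We then use the fact that a sum of decoupled Lagrangians in disjoint variables has a tensor-product Hamiltonian.

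\emph{Step 1: the splitting.} Index $R$ as the $i$-th really-connected component. By (\ref{eqn:bmapdecomp}) we have $A = A^{(i)} \oplus A'$ and $B = B^{(i)} \oplus B'$, where the primed blocks belong to $C\setminus R$. The loop set also decomposes as a disjoint union, by (\ref{eqn:floopset}). Hence $M = \tfrac12 B(\PP_{\mathcal C}-\PP_{\mathcal I})A$ is block diagonal, and $\langle q|M|\dot\phi\rangle$ separates into a term in $(\phi^{(i)},q^{(i)})$ plus a term in the remaining coordinates.

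Each energy $E^C_e$ or $E^I_e$ depends only on $Q_e$ or $\Phi_e$ for its own edge $e$. For $e \in \mathcal E^{(i)}$ these quantities involve only the coordinates of $R$. The gyrator terms $\LL_\fg$ are the only terms that can couple components. Since $v$ has degree zero in $G_N$, no $\fg\in\fG$ has an edge in $R$: by Definition \ref{defn:NCG}, any such gyrator would contribute an incidence at $v$, and a self-loop would contribute $2$. Consequently every $\LL_\fg$ depends only on the coordinates of $C\setminus R$, and $R$ is reciprocal.

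Grouping terms gives $\LL_C = \LL_R + \LL_{C\setminus R}$. Here $\LL_R$ is precisely the flux--charge symmetric Lagrangian of the reciprocal circuit $R$ and $\LL_{C\setminus R}$ is that of $C\setminus R$. This identification uses the convention that each component carries its own outer face, so that deleting $R$ leaves the loop sets of the other components unchanged. The gauge directions (\ref{eqn:gaugekernels}) of $R$ also stay within $R$.

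\emph{Step 2: reduction and Hamiltonian.} The matrix $\Omega$ in (\ref{eqn:omega}) is block diagonal, $\Omega = \Omega_R \oplus \Omega_{C\setminus R}$. Therefore its kernel is a direct sum, and every null-vector constraint involves the variables of only one summand. Well-posedness of $C$ means each of these constraints has a unique solution, so $R$ and $C\setminus R$ are each well-posed, and the reduction of Section \ref{sec:vars} can be carried out on each summand separately.

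There is one point to be careful about. One summand could be singular, meaning its $\Omega$ block vanishes. In that case it carries no phase space and contributes only a constant after its constraints are solved, which is harmless. After reduction, the Hamiltonian is $H_C = H_R\otimes 1 + 1\otimes H_{C\setminus R}$ on the product phase space.

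\emph{Step 3: simulability.} Suppose $C\setminus R$ is reciprocally-simulable, so there is a reciprocal $C'$ and a unitary $U$ with $U H_{C\setminus R} U^\dagger = H_{C'}$. Take the disjoint union $R \sqcup C'$. It is reciprocal, because $R$ contains no gyrative edges and $C'$ contains none apart from cascade pairs. By the splitting from Step 1 applied to this union, its Hamiltonian is $H_R\otimes 1 + 1\otimes H_{C'}$, which is unitarily related to $H_C$ via $1\otimes U$. Hence $C$ is reciprocally-simulable.

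I expect the main obstacle to be the bookkeeping in Step 1 that identifies $\LL_{C\setminus R}$ with the genuine Lagrangian of the deleted circuit, with correct loop and gauge structure, together with the well-posedness inheritance in Step 2. The algebraic splitting itself is immediate.
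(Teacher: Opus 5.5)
Your proposal is correct and follows essentially the same route as the paper. Block-diagonality of $M$ from (\ref{eqn:bmapdecomp}), the fact that no gyrator term touches $R$ because $\deg(v)=0$, and the locality of the element energies give the splitting; the Hamiltonian then decomposes, and simulability of $C\setminus R$ lifts to $C$. Your extra care fills in details the paper leaves implicit without changing the argument: well-posedness is inherited through the block-diagonal $\Omega$, you handle a possibly singular summand, and you build the simulator explicitly as $R\sqcup C'$ with the unitary $1\otimes U$ rather than appealing to a point transformation on the $C\setminus R$ factor.
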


\begin{proof}
    By the definition of an RCC, $R$ shares no vertices or edges with $C \setminus R$: the two sub-circuits are graph-disjoint. By the definition of the NCG, every gyrator with at least one edge in $R$ contributes an NCG edge incident on $v$, so $\deg(v) = 0$ implies no gyrator has any edge in $R$. Hence $R$ contains no gyrators.

    The connection matrix $M$ is block-diagonal over really-connected components by (\ref{eqn:bmapdecomp}). Thus, the matrix element $M_{lv'} = \tfrac{1}{2} \sum_{e} B_{le} (\mathbb{P}_\mathcal{C} - \mathbb{P}_\mathcal{I})_{ee} A_{ev'}$ vanishes whenever $l$ is a loop of one sub-circuit and $v'$ a vertex of the other, as they share no edges. The gyrator term contains no contributions involving $R$, since $R$ has no gyrators. Each energy term is local to a single edge, hence to one sub-circuit. The Lagrangian therefore splits as in (\ref{eqn:degree_zero_split}).

    Since $\mathscr L_C$ splits as a sum over disjoint dynamical variables, the Hamiltonian satisfies $H_C = H_R + H_{C \setminus R}$ on disjoint phase spaces. Since $R$ contains no gyrators, $\mathscr L_R$ is the Lagrangian of a reciprocal circuit. If $C \setminus R$ is reciprocally-simulable, then $\mathscr L_C$ is reciprocal up to a point transformation acting on the $C \setminus R$ factor, so $C$ is reciprocally-simulable. 
\end{proof}
For the purpose of using the NCG to keep track of a circuit's reciprocity, we delete degree-zero vertices from the NCG as their corresponding RCC can never cause a circuit to be non-reciprocal.
\begin{thm}[\textbf{Degree-one deletion}]\label{thm:degree_one}
    Let $C$ be a nonsingular, well-posed LCG circuit and let $G_N$ be its NCG. Suppose $v \in \mathcal{V}'$ has degree one in $G_N$, with unique incident edge $\varepsilon$ corresponding to a gyrator $\mathfrak{g}_0$. Then there exists a nonsingular, well-posed LCG circuit $\widetilde C$, related to $C$ by a point transformation, whose NCG is obtained from $G_N$ by deleting $v$ and $\varepsilon$. If $\widetilde C$ is reciprocally-simulable, then $C$ is reciprocally-simulable. In particular, $\widetilde C$ has exactly one fewer gyrator than $C$, and the deletion introduces no transformer.
\end{thm}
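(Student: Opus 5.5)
Since $v$ has degree one and a self-loop contributes $2$ to the degree, the unique incident edge $\varepsilon$ is not a self-loop. So $\mathfrak g_0 = (e_1, e_2)$ has exactly one gyrative edge, say $e_1$, in the really-connected component $R$ corresponding to $v$, and $e_2$ lies in a different RCC $R'$. Every other edge of $R$ is reciprocal: any further gyrative edge in $R$ would contribute another NCG incidence at $v$. Thus $R$ consists of the single gyrative edge $e_1 = (v_1, v_2)$ together with a reciprocal sub-circuit $S = R \setminus \{e_1\}$, which is possibly disconnected or empty. I would first argue that $S$ is in parallel with $e_1$ in the sense of Definition \ref{defn:seriesparallel}. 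Since $R$ is a connected component of the full graph, no vertex of $S$ other than $v_1, v_2$ meets an edge outside $S$. Hence $\mathscr L_S$ couples to the rest of $C$ only through $\Phi_{e_1}$, and the configuration is exactly that of Figure \ref{fig:generic_left} with $S_1$ empty, $S_2 = S$, and $S_3$ the rest of the circuit attached at $e_2$.

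Next I would apply the partial dual transformation of Section \ref{sec:rules}. Use edge-vertex coordinates on $R$ and face-edge coordinates on $R'$, and gauge-fix $\phi_{v_2} = 0$ together with one loop charge of $R'$. The variable $Q_{e_1}$ then acts as a Lagrange multiplier enforcing $G_{\mathfrak g_0}\phi_{v_1} = q_{f} + c$, where $f$ is the relevant face of $e_2$. Substituting this, and rescaling the internal variables of $S$ as in (\ref{eqn:Gscaling}), turns $\mathscr L_S$ into the Lagrangian of the rescaled dual $S^*$, with element energies given by (\ref{eqn:elementrules}). This $S^*$ sits in series with $e_2$ inside $R'$, as in Figure \ref{fig:partialdualb}. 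After the move, $e_1$ is a gyrative edge whose RCC contains nothing else. With $S$ removed, $e_1$ terminates on an open: it carries no current, so by (\ref{eqn:gyrconst}) its partner $e_2$ has no voltage. The open-terminated deletion rule then deletes $\mathfrak g_0$, replacing $e_2$ by a short, and the isolated RCC left behind carries no dynamics and can be discarded.

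I would then read off the resulting circuit $\widetilde C$. It is $C$ with $R$ removed, $e_2$ replaced by a short, and $S^*$ inserted in series at that location of $R'$. All remaining gyrators keep their edges in the same RCCs. The only exception is that the vertex of $R'$ is enlarged by $S^*$, which is really-connected to $R'$. Hence the NCG of $\widetilde C$ is $G_N$ with $v$ and $\varepsilon$ deleted, and $\widetilde C$ has one fewer gyrator. No cascade pair has been created, so no transformer is introduced. Each step is a point transformation: the Lagrange-multiplier elimination, the rescaling, and the trivial deletion. Hence $\mathscr L_C$ and $\mathscr L_{\widetilde C}$ are related by a point transformation. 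As remarked before Definition \ref{defn:recip_block}, this makes the Hamiltonians unitarily equivalent, so reciprocal simulability of $\widetilde C$ transfers to $C$. Nonsingularity and well-posedness are preserved because point transformations preserve the equations of motion and the solvability of their algebraic constraints, as noted at the end of Section \ref{sec:family}.

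The step I expect to be the main obstacle is justifying that the whole of $S$ can be moved at once when $S$ is not a single two-port. $S$ may have several connected components, some attached only at $v_1$, some only at $v_2$, or at neither. For the parts touching neither vertex, or only one vertex, I would split them off as decoupled reciprocal pieces. These either detach as degree-zero reciprocal components, handled by Theorem \ref{thm:degree_zero}, or contribute nothing through $\Phi_{e_1}$ and are dropped. I would then apply the partial dual only to the union of components meeting both $v_1$ and $v_2$. That union is a legitimate two-port in parallel with $e_1$. A second subtlety is the offset constant $c$ when $S$ is non-linear. I would fix $c$ to the conserved value of $G_{\mathfrak g_0}\phi_{v_1} - q_f$, which is a Noether charge, so that the two circuits are isospectral in the corresponding superselection sector, as in Section \ref{sec:rules}.
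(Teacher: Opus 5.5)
Your proposal is correct and follows essentially the paper's route: apply the partial dual to the components of $R\setminus e_1$ that meet both $v_1$ and $v_2$, moving them into series with $e_2$. Then delete $\mathfrak g_0$ by the open-terminated rule and discard the leftover single-endpoint pieces via Theorem~\ref{thm:degree_zero}. One small correction: the single-endpoint pieces are still part of $R$ and become separate degree-zero RCCs only after $e_1$ is deleted, and the ``neither'' case cannot occur because $R$ is connected; this is harmless, since after the move $e_1$ is a bridge of $R$ and so carries no current even with those pieces still attached.
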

\begin{proof}
    Let $R$ denote the RCC at $v$. Since $\deg(v) = 1$, $\varepsilon$ is not a self-loop, so $\mathfrak{g}_0$ has exactly one gyrative edge $e_1 = (v_1, v_2)$ in $R$, and $\mathfrak{g}_0$ is the only gyrator with a gyrative edge in $R$, since any other would contribute a second NCG edge incident on $v$. Let $u$ be the NCG neighbor of $v$ adjoined by $\varepsilon$, and let $e_2$ be the other gyrative edge of $\mathfrak{g}_0$, which lies in the RCC at $u$.

    Every reciprocal element of $R$ lies in some connected component of $R \setminus e_1$, and since $R$ is connected each such component contains $v_1$ or $v_2$. A component containing both touches the rest of the circuit only at $v_1$ and $v_2$, so it is in parallel with $e_1$, and by the partial dual transformation of Section \ref{sec:rules} it may be turned into its dual (with element parameters adjusted as in \ref{eqn:elementrules}) and moved to the RCC at $u$, in series with $e_2$; the transformation leaves every other element and gyrator of $C$ untouched. Doing so for every such component evacuates $R$ of all reciprocal content in parallel with $e_1$, yielding a circuit $C^{(1)}$, related to $C$ by a point transformation, in which no component of $R \setminus e_1$ contains both endpoints of $e_1$. Then $e_1$ carries no current, since any current through it would have to return through a component of $R \setminus e_1$ containing both of its endpoints; $e_1$ is in series with an open, and the open-terminated gyrator deletion rule of Section \ref{sec:rules} applies: $\mathfrak{g}_0$ is deleted from $C^{(1)}$ and $e_2$ is replaced by a short. Call the resulting circuit $\widetilde C$; its Lagrangian is related to that of $C^{(1)}$ by an algebraic elimination of the constrained gyrator variables, and therefore to that of $C$ by composition.

    In $\widetilde C$, the dual sub-circuits lie in the RCC at $u$, and the short replacing $e_2$ joins two vertices of that RCC. What remains of $R$ consists of the components of $R \setminus e_1$ that contained only one endpoint of $e_1$; they contain no gyrative edge, are now disconnected from the rest of the circuit, and are deleted by Theorem \ref{thm:degree_zero}. The original $v$ therefore no longer corresponds to any RCC, and the NCG of $\widetilde C$ is $G_N$ with $v$ and $\varepsilon$ removed; no other NCG edge is affected, since no other gyrator was incident on $R$. Finally, $\widetilde C$ has exactly one fewer gyrator than $C$, and no transformer has been introduced.
\end{proof}

Simply put, the above rule states that one may evacuate all the reciprocal components off of an RCC with a single gyrative edge. Without the reciprocal components the gyrative edge must either be a short or an open; either way it is deleted and thus that gyrator cannot be responsible for a circuit's NCG's irreducibility. Note that the partial dual transformation moves only sub-circuits of capacitors and inductors that are in series or in parallel with a single gyrative edge. It does not move a cascade pair, since a gyrative edge carries a coupling to its partner in another RCC. Nor does it move reciprocal blocks across a gyrator which is contained within a single RCC. Finally, it cannot move reciprocal blocks that are adjacent to but not in series or in parallel with a gyrative edge (a case that does not happen for sub-circuits corresponding to degree one vertices in an NCG).

An NCG is said to be reducible if repeated application of the degree-zero and degree-one rules (Theorems \ref{thm:degree_zero} and \ref{thm:degree_one}) reduces $G_N$ to the empty graph. Every vertex on a cycle of $G_N$ has degree at least two and keeps it under deletion of other vertices, so an NCG is reducible exactly when it is a forest, and a circuit with a reducible NCG is reciprocally-simulable by a circuit containing no gyrators at all (Theorem \ref{thm:tree_reciprocal}).

\begin{thm}[\textbf{Forest NCGs are reciprocally-simulable}]\label{thm:tree_reciprocal}
    Let $C$ be a nonsingular, well-posed LCG circuit. If the NCG $G_N$ of $C$ is a forest, then $C$ is reciprocally-simulable by a circuit containing no gyrators.
\end{thm}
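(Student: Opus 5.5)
The argument is an induction on the number of gyrators $|\fG|$ (equivalently, on the number of edges of $G_N$). It combines Theorems \ref{thm:degree_zero} and \ref{thm:degree_one} with the elementary fact that every finite forest with at least one edge has a vertex of degree one.

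For the base case, $|\fG| = 0$. Then $C$ already contains only capacitive and inductive elements, so it simulates itself trivially. Its NCG has no edges, and each vertex has degree zero, so Theorem \ref{thm:degree_zero} lets us discard vertices without affecting anything.

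For the inductive step, suppose $|\fG| \geq 1$ and $G_N$ is a forest.

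1. Since a forest has no self-loops, no gyrator lies within a single RCC. Pick a tree component of $G_N$ with at least one edge.
2. A finite tree with at least one edge has at least two leaves. Choose a leaf $v$, whose unique incident edge $\varepsilon$ corresponds to a gyrator $\mathfrak g_0$.
3. Theorem \ref{thm:degree_one} produces a nonsingular, well-posed circuit $\widetilde C$, related to $C$ by a point transformation, with one fewer gyrator and no transformers.
4. The NCG of $\widetilde C$ is $G_N$ with $v$ and $\varepsilon$ deleted. Deleting a vertex and its edge from a forest leaves a forest, so $\widetilde C$ satisfies the hypotheses with $|\fG|-1$ gyrators.
5. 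By induction, $\widetilde C$ is reciprocally-simulable by a circuit with no gyrators. Theorem \ref{thm:degree_one} transfers this to $C$.
6. Point transformations compose, so the combined map is a point transformation. Its Legendre transform is a canonical transformation, hence unitary after quantization, as remarked after Definition \ref{defn:recip_sim}.

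The main obstacle is the bookkeeping needed to show that the final simulating circuit contains \emph{no} gyrators, not merely only cascade pairs. The induction must never create a transformer. This is exactly the last clause of Theorem \ref{thm:degree_one}: the open-terminated rule replaces $e_2$ by a short rather than by a transformer.

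A second point to check is that applying Theorem \ref{thm:degree_one} leaves RCCs other than $R$ and the neighbor $u$ unchanged. It also must not merge or split them in a way that alters the remaining NCG edges.

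- Inside the RCC at $u$, the short and the dual sub-circuits are added within the existing component, so that component's NCG vertex persists.
- The leftover pieces of $R$ become degree-zero vertices, which are removed by Theorem \ref{thm:degree_zero}.

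Once the last gyrator is removed, every remaining NCG vertex has degree zero. Applying Theorem \ref{thm:degree_zero} componentwise then shows the final Lagrangian is a sum of reciprocal Lagrangians.
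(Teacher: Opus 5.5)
Your proposal is correct and matches the paper's proof essentially step for step. Both argue by induction on the number of NCG edges (equivalently, gyrators) and remove a leaf via Theorem~\ref{thm:degree_one}; both conclude that no gyrators survive because that theorem deletes one gyrator per step and introduces no transformer. The only cosmetic difference is the base case: you observe that the gyrator-free circuit is already reciprocal, whereas the paper deletes its RCCs one by one with Theorem~\ref{thm:degree_zero}.
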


\begin{proof}
    We proceed by induction on the number of edges of $G_N$.

   \textbf{Base case ($G_N$ has no edges):} Every vertex of $G_N$ has degree zero, so by Theorem \ref{thm:degree_zero} each RCC may be deleted in turn, leaving the empty circuit, which is trivially reciprocal. Hence $C$ is reciprocally-simulable.

    \textbf{Inductive step:} Suppose the claim holds for all nonsingular, well-posed LCG circuits whose NCG is a forest with fewer than $m$ edges, and let $C$ have NCG $G_N$ a forest with $m \geq 1$ edges. Every forest with at least one edge has a leaf (a degree one vertex). Let $v$ be such a leaf, with unique incident edge $\varepsilon$.

    By Theorem \ref{thm:degree_one}, there exists a nonsingular, well-posed LCG circuit $\widetilde C$ related to $C$ by a point transformation, whose NCG is $G_N$ with $v$ and $\varepsilon$ removed. The graph $G_N \setminus \{v, \varepsilon\}$ is a forest on $m - 1$ edges. By the inductive hypothesis, $\widetilde C$ is reciprocally-simulable. Since $\widetilde C$ has Hamiltonian unitarily equivalent to $C$, $C$ is reciprocally-simulable as well. Since Theorem \ref{thm:degree_one} deletes one gyrator at each step and introduces none, the final circuit contains no gyrators.
\end{proof}

For bipartite NCGs that are not forests, we do not rely on the gyrator deletion rules. Instead, the component-wise duality transformation of Section \ref{sec:cwd} converts every gyrator at once into a transformer.

\begin{thm}[\textbf{Bipartite NCGs are reciprocally-simulable}]\label{thm:bipartite}
    Let $C$ be a nonsingular, well-posed LCG circuit and let $G_N$ be its NCG. If $G_N$ is bipartite, then $C$ is reciprocally-simulable.
\end{thm}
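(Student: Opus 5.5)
The plan is to use the component-wise duality transformation of Section \ref{sec:cwd} rather than the deletion rules. Since $G_N$ is bipartite, it has no self-loops, so every gyrator $\fg=(e_1,e_2)$ has its two gyrative edges in distinct really-connected components. Fix a bipartition $\mathcal V' = P_1 \sqcup P_2$ of the NCG vertices such that every NCG edge joins $P_1$ to $P_2$. Equivalently, every gyrator has exactly one gyrative edge in an RCC of $P_1$ and one in an RCC of $P_2$. Any degree-zero vertices may be placed in either part, or first removed via Theorem \ref{thm:degree_zero}.

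Next, apply the component-wise dual to every RCC in $P_2$. By (\ref{eqn:bmapdecomp}), $M$ is block diagonal over RCCs. Each energy term is local to one edge. Each $\mathscr L_\fg$ is a sum of a flux-flux bilinear term and a charge-charge bilinear term in edge variables. The substitution $\phi_v \to q^*$, $q_l \to \phi^*$, together with $A^{(i)}\to B^{(i)\intercal}$ and $B^{(i)}\to A^{(i)\intercal}$ on the $P_2$ blocks only, is therefore a relabeling of coordinates. By (\ref{eqn:Mstar}) and the integration by parts following (\ref{eqn:Lstar}), it maps the $P_2$ part of $\langle q|M|\dot\phi\rangle$ into its own form up to a total derivative. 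It also exchanges inductors and capacitors in $P_2$ with the same energy functions, as in (\ref{eqn:ECstar}) and (\ref{eqn:EIstar}). This is a point transformation, so the Hamiltonian spectrum is preserved.

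The key step is the effect on each gyrator term. Write $\fg=(e_1,e_2)$ with $e_1$ in a $P_1$ component and $e_2$ in a $P_2$ component. Under the relabeling, $\mathscr L_\fg$ becomes
\begin{equation}
\tfrac{G_\fg}{2}\,\Phi_{e_1}\dot Q^*_{e_2} - \tfrac{1}{2G_\fg}\,Q_{e_1}\dot\Phi^*_{e_2}.
\end{equation}
Varying this term gives, exactly as in the example of Figure \ref{fig:cwd_example}, relations of the form $\dot\Phi^*_{e_2}\propto G_\fg\dot\Phi_{e_1}$ and $\dot Q_{e_1}\propto -G_\fg\dot Q^*_{e_2}$. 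These are the constitutive relations of an ideal transformer with turns ratio $G_\fg/G_0$.

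This reading is where I expect the main care to be needed. One must check that the term, together with the modified symplectic pieces on each side, reproduces precisely the transformer relations and nothing else. One must also check that the sign conventions match the cascade realization of Section \ref{sec:combo}, which has turns ratio $n=G_a/G_b$. Then realize each transformer by a cascade pair: insert a new RCC consisting of two gyrative edges joined at two vertices incident on nothing else, with ratios satisfying $G_a/G_b=G_\fg$. The cascade-to-transformer rule, being a point transformation, guarantees that this insertion reproduces the same Lagrangian after eliminating the junction variables.

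The resulting circuit $C'$ contains only capacitive and inductive elements together with gyrators, and every gyrator belongs to a cascade pair. Its Hamiltonian is unitarily related to that of $C$, since it is obtained by composing point transformations. By Definition \ref{defn:recip_sim}, $C$ is therefore reciprocally-simulable.

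Finally, nonsingularity and well-posedness transfer to $C'$ because the transformations preserve the equations of motion. Planarity of the dualized components, required for $A^*=B^\intercal$ to be a valid incidence matrix, holds by the standing assumption of Section \ref{sec:notation}.
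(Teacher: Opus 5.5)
Your proposal is correct and follows essentially the same route as the paper: two-color the NCG (with each pair ordered, via $G_{(e_2,e_1)} = -G_{(e_1,e_2)}$, so that $e_1$ lies in an untransformed RCC), apply the component-wise dual to one color class, read each gyrator as an ideal transformer of turns ratio $G_{\fg}/G_0$, and realize that transformer by a cascade pair whose inner RCC is eliminated by a point transformation. The check you flag as delicate is immediate in the paper's version: the relabeling $\Phi_{e_2}\to Q^*_{e_2}$, $Q_{e_2}\to\Phi^*_{e_2}$ is a point transformation, so it simply rewrites (\ref{eqn:gyrconst}) as the exact relations $G_{\fg}\dot\Phi_{e_1} = \dot\Phi^*_{e_2}$ and $\dot Q_{e_1} = -G_{\fg}\dot Q^*_{e_2}$, leaving no proportionality constants or signs to fix.
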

\begin{proof}
    Since $G_N$ is bipartite it has no self-loops, and its vertices may be two-colored so that every gyrator $\mathfrak{g} = (e_1, e_2)$ has $e_1$ in an RCC of the first color and $e_2$ in an RCC of the second, reordering the pair if necessary using $G_{(e_2,e_1)} = -G_{(e_1,e_2)}$.

    Apply the component-wise duality transformation of Section \ref{sec:cwd} to the RCCs of the second color. By (\ref{eqn:bmapdecomp}) the Lagrangian (\ref{eqn:no_omega_lagrangian}) splits as $\mathscr L = \sum_i \big[\langle q^{(i)}|M^{(i)}|\dot\phi^{(i)}\rangle - E^{(i)}\big] + \sum_{\mathfrak{g}\in\mathfrak{G}}\mathscr L_{\mathfrak{g}}$, where $E^{(i)}$ collects the element energies of the $i$-th RCC. On each transformed RCC the bracket becomes, up to a total time derivative, the Lagrangian of the planar dual $G^{(i)*}$ with capacitors and inductors exchanged, by (\ref{eqn:Mstar}), (\ref{eqn:ECstar}), (\ref{eqn:EIstar}) and (\ref{eqn:Lstar}); $G^{(i)*}$ is planar because $G^{(i)}$ is. Untransformed RCCs are unaffected. The transformation is a relabeling of coordinates, hence a point transformation, and the EL equations are unchanged.

    Every gyrator has $e_1$ in an untransformed RCC and $e_2$ in a transformed one. Under the relabeling, $\Phi_{e_2} = \sum_v A_{e_2 v}\phi_v$ becomes $\sum_v A_{e_2 v} q^*_v = Q^*_{e_2}$, the edge charge of $e_2^*$ in $G^{(i)*}$, since $B^{(i)*} = (A^{(i)})^\intercal$, and $Q_{e_2} = \sum_l B_{l e_2} q_l$ becomes $\sum_l B_{l e_2}\phi^*_l = \Phi^*_{e_2}$, since $A^{(i)*} = (B^{(i)})^\intercal$. The constitutive relations (\ref{eqn:gyrconst}) therefore read
    \begin{equation}
        \begin{aligned}
                    G_{\mathfrak{g}}\dot\Phi_{e_1} &= \dot\Phi^*_{e_2}\\
                    \dot Q_{e_1} =& -G_{\mathfrak{g}}\,\dot Q^*_{e_2}\\
        \end{aligned}
    \end{equation}
    which are the constitutive relations of an ideal transformer between $e_1$ and $e_2^*$ with turns ratio $G_{\mathfrak{g}}/G_0$; compare Section \ref{sec:combo}, where the cascade of ratios $G_1, G_2$ gave $\dot\Phi_{e_2'} = (G_1/G_2)\dot\Phi_{e_1}$ and $\dot Q_{e_1} = -(G_1/G_2)\dot Q_{e_2'}$. Each such coupling is realized in the LCG element set by a cascade pair (Definition \ref{defn:cascadepair}) of ratios $G_1/G_2 = G_{\mathfrak{g}}/G_0$, whose two inner gyrative edges form an RCC by themselves; the algebraic elimination of that RCC's variables is a point transformation returning the relations above. The resulting circuit $\widetilde C$ is an LCG circuit in which every gyrator belongs to a cascade pair, related to $C$ by a composition of point transformations, and is nonsingular and well-posed because its EL equations are those of $C$. Therefore, $C$ is reciprocally-simulable.
\end{proof}

\end{appendix}

\bibliography{bibliography}
\end{document}